\documentclass[sigconf,authorversion,nonacm]{acmart}

\usepackage{graphicx}
\usepackage{amsmath}
\usepackage{amsthm}
\usepackage{enumitem}
\usepackage{hyperref}
\usepackage[caption=false]{subfig}
\usepackage{url}
\usepackage{microtype}
\usepackage{array}
\usepackage{proof}

\usepackage{relsize}
\usepackage{tikz}

\usepackage{tkz-euclide}
\usetikzlibrary{arrows.meta}

\newtheorem{axiomu}{Axiom}
\newtheorem{axiomc}{Axiom}

\DeclareMathOperator{\Left}{Left}
\DeclareMathOperator{\col}{Col}
\DeclareMathOperator{\p}{Point}
\DeclareMathOperator{\seg}{Segment}

\begin{document}

\title{A Machine-Checked Direct Proof of the Steiner-Lehmus Theorem}

\author{Ariel Kellison}
\orcid{0000-0003-3177-7958}
\affiliation{%
  \institution{Department of Computer Science, Cornell University}
    \city{Ithaca}
  \state{NY}
  \country{USA}}
\email{ak2485@cornell.edu}

\begin{abstract}
A direct proof of the Steiner-Lehmus theorem has eluded geometers for over 170 years. The challenge has been that a proof is only considered direct if it does not rely on \emph{reductio ad absurdum}. Thus, any proof that claims to be direct must show, going back to the axioms, that all of the auxiliary theorems used are also proved directly. In this paper, we give a proof of the Steiner-Lehmus theorem that is guaranteed to be direct. The evidence for this claim is derived from our methodology: we have formalized a constructive axiom set for Euclidean plane geometry in a proof assistant that implements a constructive logic and have built the proof of the Steiner-Lehmus theorem on this constructive foundation.
\end{abstract}

\begin{CCSXML}
<ccs2012>
<concept>
<concept_id>10003752.10003790.10003796</concept_id>
<concept_desc>Theory of computation~Constructive mathematics</concept_desc>
<concept_significance>500</concept_significance>
</concept>
<concept>
<concept_id>10003752.10003790.10002990</concept_id>
<concept_desc>Theory of computation~Logic and verification</concept_desc>
<concept_significance>500</concept_significance>
</concept>
</ccs2012>
\end{CCSXML}

\ccsdesc[500]{Theory of computation~Constructive mathematics}
\ccsdesc[500]{Theory of computation~Logic and verification}

\keywords{constructive logic, proof assistants, constructive geometry, foundations of mathematics}

\maketitle

\section{Introduction}
The Steiner-Lehmus theorem, which states that 
\emph{if two internal angle bisectors of a triangle are equal
then the triangle is isosceles,} was posed by C. L. Lehmus in 1840. Since the publication of Jakob Steiner's 1844 proof of the theorem, it has become somewhat infamous for the many failed attempts of a \emph{direct} proof; that is, one that does not use \emph{reductio ad absurdum}. 
\begin{figure}[h!]
\centering
     \includegraphics[scale=1.0]{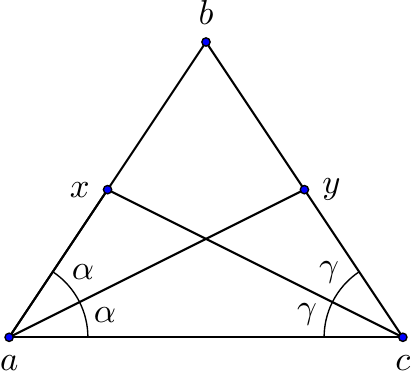}
     \caption{The Steiner-Lehmus theorem: if $ay \cong cx$, $\angle cay \cong \angle yab$, and $\angle bcx \cong xca$ then $ab \cong cb$.}
     \label{fig:SL}
\end{figure}
Numerous allegedly direct proofs have appeared over the years, only to later be discredited due to their reliance on \emph{reductio ad absurdum} by means of auxiliary theorems with indirect proofs. Given the many failures to provide a direct proof, attempts have been made to prove that a direct proof can't possibly exist~\cite{Sylvester,conway_ryba_2014}. Recently, we have been reassured that a direct proof does exist~\cite{Pambuccian}, but we have yet to see one. Thus, the history of the Steiner-Lehmus theorem serves as 177 years of evidence that a human can't account for \emph{all} instances of the use of particular rule of logic, even in the proof of a theorem that many would consider to be rather elementary. 

In this paper, we provide a direct proof of the Steiner-Lehmus theorem. Our guarantee of directness is obtained on modern terms: by formalizing a constructive axiom set for Euclidean plane geometry in the Nuprl proof assistant~\cite{Nuprl} and building a proof of the Steiner-Lehmus theorem on this constructive foundation. 

Our direct proof of the Steiner-Lehmus theorem is given in Section~\ref{sec:SLT} and can be found in the Nuprl library\footnote{The entire formalization of geometry can be found at 
\url{http://www.nuprl.org/LibrarySnapshots/Published/Version2/Mathematics/euclidean!plane!geometry/index.html} and the Steiner-Lehmus theorem can be found at \url{http://www.nuprl.org/LibrarySnapshots/Published/Version2/Mathematics/euclidean!plane!geometry/Steiner-LehmusTheorem.html}.}. Of the many indirect proofs published since the theorem was first posed by Lehmus in 1840, ours is superficially the most similar to the proof given by R.W. Hogg in 1982~\cite{hogg_1982}. Like many of the former proofs of the Steiner-Lehmus theorem, the proof given by Hogg uses \emph{reductio ad absurdum} both explicitly and implicitly; the implicit use is hidden in auxiliary constructions. In comparison, our proof is completely free from the use of \emph{reductio ad absurdum}. 
The atomic relations, axioms, and definitions used in our direct proof of the Steiner-Lehmus theorem are described in Sections~\ref{sec:relations}--\ref{sec:axioms}. The soundness of our axioms with respect to the constructive reals provides additional assurance that the axioms themselves do not harbor any hidden instances of \emph{reductio ad absurdum}. The basis for our model in the constructive reals is given in Section \ref{sec:model}. 

Before introducing the necessary axioms and definitions in Sections~\ref{sec:relations}--\ref{sec:axioms}, we first provide background on the methods of constructive logic that clarify how a direct proof of the Steiner-Lehmus theorem was obtained. 

\section{Constructive Proof, Stability, and Decidability}\label{sec:proof}
When the Steiner-Lehmus theorem was first posed in 1840, the field of logic was not fully formed. Perhaps if it had been, and constructive logic had flourished, geometers would have realized that the Steiner-Lehmus theorem is an example of a \emph{proof of negation}. In particular, the notion of a triangle being isosceles is constructively understood to be a negative statement about a strict notion of inequality of segment lengths. While it is generally assumed that the use of case distinctions is rejected in constructive reasoning, the proof of a negation is an instance where reasoning by cases is constructively valid. 


\emph{Proof by contradiction} (reductio ad absurdum) is a classically admissible reasoning principle that allows one to prove a proposition $P$ by assuming $\neg P$ and deriving absurdity. A \emph{proof of negation} is superficially similar, as one provides proof of the proposition $\neg P$ by assuming $P$ and deriving absurdity. The validity of the two is clearly differentiated in constructive reasoning by the general rejection of the law of double negation elimination for arbitrary propositions $P$: in constructive logic, a proof of negation remains perfectly valid, while a proof by reductio ad absurdum, which is classically equivalent to the law of double negation elimination, does not. 

While the law of double negation elimination is not constructively valid for arbitrary propositions $P$, it is provably true for some propositions. Propositions for which double negation elimination is provably true are referred to as \emph{stable}:
\begin{definition} (Stability) \newline
\begin{center} A proposition $P$ is stable if ${\neg \neg  P \rightarrow \ P }$ holds. \end{center}\label{def:stable}
\end{definition} In constructive logic, all negative propositions are stable. Specifically,
\begin{theorem}\label{thm:negstable}
For all propositions $P$, $\neg \neg \ (\neg P) \rightarrow \neg P$.
\end{theorem}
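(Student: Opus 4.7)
The plan is to prove this as a tautology of minimal logic by directly constructing a proof term. Since both the hypothesis $\neg\neg(\neg P)$ and the conclusion $\neg P$ are (iterated) negations, the argument reduces, once negations are unfolded, to a short chain of implication introductions and eliminations; no appeal to excluded middle or double negation elimination is required.

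Concretely, I would begin by assuming $h : \neg\neg\neg P$ and aiming to produce $\neg P$. Unfolded, $h$ has type $(\neg\neg P) \to \bot$, and the goal $P \to \bot$ then asks us to further assume some $p : P$ and to derive $\bot$. The pivotal step is the constructively valid double negation introduction $P \to \neg\neg P$: given $p$, one produces a witness of $\neg\neg P$ by assuming an arbitrary $q : \neg P$ and applying $q$ to $p$. Feeding the resulting term of type $\neg\neg P$ into $h$ yields $\bot$, closing the derivation and, after discharging the two assumptions, giving $\neg\neg\neg P \to \neg P$.

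No step here is a real obstacle: each inference is a basic intuitionistic rule, and the overall shape is a standard pattern sometimes called the \emph{triple negation law} of minimal logic. What matters for the rest of the paper is the consequence of the theorem rather than its proof: together with the preceding definition, it shows that every negated proposition is stable. This is precisely what licenses the case-based ``proof of negation'' structure that the author has already advertised as the constructive mechanism underlying the Steiner-Lehmus argument, whose conclusion can itself be read as the negation of a strict inequality of segment lengths.
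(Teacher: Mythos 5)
Your proof is correct: the term $\lambda h.\,\lambda p.\,h\,(\lambda q.\,q\,p)$ is exactly the standard minimal-logic derivation of the triple negation law, and the paper itself states this theorem without proof precisely because it is this routine. Your closing remark about its role — licensing stability of all negatively defined relations and hence the case distinctions in the Steiner-Lehmus argument — matches the paper's intended use.
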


Furthermore, the proof of a stable proposition permits case distinctions. In particular, the double negation of the law of excluded middle for arbitrary propositions is constructively valid:
\begin{theorem}
For all propositions $P$, $\neg \neg \ ( P \ \lor \neg P)$.
\end{theorem}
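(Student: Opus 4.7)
The plan is to prove $\neg\neg(P \lor \neg P)$ directly as a proof of negation: since the goal has the form $\neg Q$ where $Q \equiv \neg(P \lor \neg P)$, it is constructively legitimate to assume $Q$ and derive absurdity. So I would begin by introducing the hypothesis $h : \neg(P \lor \neg P)$ and aim to produce a term of type $\bot$.

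With $h$ in hand, the strategy is a two-step application of $h$ to carefully chosen disjuncts. First I would establish the auxiliary fact $\neg P$ by yet another proof of negation: assume $p : P$, form $\mathrm{inl}(p) : P \lor \neg P$, and apply $h$ to it to obtain $\bot$; discharging $p$ gives $\neg P$. This inner step is itself a proof of negation, not a use of \emph{reductio ad absurdum}, so the argument stays within constructive logic.

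Having derived $np : \neg P$, I would then form $\mathrm{inr}(np) : P \lor \neg P$ and apply $h$ a second time to obtain $\bot$. Discharging the original assumption $h$ yields $\neg\neg(P \lor \neg P)$, as required.

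The only subtle point — and the one worth emphasizing in the paper's constructive setting — is the justification that the whole argument counts as direct. Both uses of \emph{reductio}-looking reasoning here are in fact proofs of negation (the outer goal is $\neg Q$, the inner lemma is $\neg P$), so they are licensed by the introduction rule for negation rather than by double negation elimination. No appeal to Theorem~\ref{thm:negstable} or to stability of any positive proposition is needed; the result follows from pure intuitionistic implication and disjunction introduction. I do not anticipate any real obstacle, since the proof is a textbook application of the observation that the excerpt has just made about the constructive status of proofs of negation.
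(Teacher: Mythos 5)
Your proof is correct: the two-stage application of the hypothesis $h : \neg(P \lor \neg P)$ — first to $\mathrm{inl}(p)$ to extract $\neg P$, then to $\mathrm{inr}$ of that result — is the standard intuitionistic derivation, and your observation that both nested steps are proofs of negation rather than uses of \emph{reductio ad absurdum} is exactly the distinction the paper draws in Section~2. The paper itself states this theorem without proof, so there is nothing to diverge from; your argument is the canonical one and needs no appeal to stability or to Theorem~\ref{thm:negstable}.
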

Thus, in the proof of a stable proposition $P$, one need only introduce the double negation of the desired case distinction as a hypothesis and apply the appropriate elimination rules to obtain the desired cases. 

Finally, a strictly stronger notion than stability is decidability:
\begin{definition} (Decidability) \newline
\begin{center} A proposition $P$ is decidable if ${P \ \lor  \neg  P}$ holds. \end{center} \label{def:decidable} 
\end{definition}While the decidability of arbitrary propositions does not hold constructively, it is a provable property for many propositions $P$, specifically those for which there is an algorithm for deciding which of $P$ or $\neg P$ holds. 

\subsection{Stable Relations in Constructive Geometry}\label{sec:stable}
We do not take the stability or decidability of any of our atomic or defined relations as axioms. This choice clearly distinguishes our theory from those that take the stability of equality, betweenness, or congruence as axioms~\cite{Beeson2015-BEEACV, MBeeson1,kellison}. Instead, we define equivalence, collinearity, betweenness, and congruence (Definitions \ref{def:eq}, \ref{def:col}, \ref{def:bet}, and \ref{def:cong}) negatively in terms of a strictly positive atomic relation. It follows from Theorem \ref{thm:negstable} that these are stable relations. Our axioms therefore clearly distinguish the geometric propositions that constructively permit case distinctions from those that do not. 

Finally, the choice to not take the stability of equality as an axiom is driven by the desired model, which is the Nuprl implementation of the constructive reals. If we were to take the stability of equality as an axiom, then equality and equivalence would coincide, which does not hold in the Nuprl implementation of the constructive reals~\cite[p.~3]{Bickford1}.

\section{Constructive Geometric Primitives and Relations}\label{sec:relations}

Our axioms rely on two atomic relations on points: a quaternary relation representing an ordering on segment lengths and a ternary relation for plane orientation. These relations are introduced using the formalism of type theory to parallel their implementation in the Nuprl proof assistant. For example, the statement $a:\p$ is to be read as ``$a$ of type $\p$."

\subsection{Segments in Type Theory}\label{sec:seg-and-apart}

The segment type is defined as the Cartesian product of two points. The elements of a Cartesian product are pairs, denoted $\langle a,b \rangle$. If $a$ has type $\p$ and $b$ has type $\p$, then $\langle a,b \rangle$ has type $\p \times \p$: 
\begin{align*}
\infer{\langle a,b \rangle \ : \ \p \times \p}{a : \p & b : \p}.
\end{align*} We will abbreviate segment pairs $\langle a,b \rangle$ by simply writing $ab$:
\begin{align*}
\infer{ab:\seg}{a:\p & b:\p}
\end{align*} When it is necessary to decompose the points constituting a segment $ab$, we may write $\texttt{fst}(ab)$ and $\texttt{snd}(ab)$ for $a$ and $b$ respectively.


\subsection{Atomic Relations and Apartness}

Constructive geometry traditionally utilizes a binary \emph{apartness} relation in place of equality~\cite{HEYTING_axioms-plane-affine-geo,vPlato,vanDalen1996, kellison}. A notable exception is the axiom set presented by Lombard and Vesley~\cite{LOMBARD1998229}, which uses an atomic six place relation and defines a binary apartness relation in terms of the atomic six place relation. In this work, we use an atomic quaternary \emph{strictly greater than} relation to define a binary apartness relation. In particular, given the four points $a,b,c,$ and $d$, if the length of the segment $ab$ is \emph{strictly greater than} the length of the segment $cd$, then the atomic ordering relation on points will be denoted by $ab>cd$. A binary \emph{apartness} relation on points can then be defined using the atomic  \emph{strictly greater than} relation as follows.
\begin{definition}[Apartness of points] The points $a$ and $b$ satisfy an \emph{apartness} relation if the length of the segment $ab$ is strictly greater than the length of the null segment $aa$: \[ a \# b := ab > aa.\]\label{def:papart}
\end{definition} 
The \emph{strictly greater than} relation is used to define two additional quaternary relations on points: apartness of segment lengths and a non-strict ordering of segment lengths. 

 \begin{definition}[Apartness of segment lengths] The length of the segments $ab$ and $cd$ satisfy a \emph{length apartness} relation if either the length of the segment $ab$ is strictly greater than the length of the segment $cd$ or the length of the segment $cd$ is strictly greater than the length of the segment $ab$: \[ ab \# cd := ab > cd \ \lor \ cd > ab.\]\label{def:lenapart} 
\end{definition}

\begin{definition}[Non-strict order of segment lengths] The length of $ab$ is \emph{greater than or equal to} the length of $cd$ if the length of $cd$ is not strictly greater than the length of $ab$:
\begin{align*}ab\ge cd : = \neg cd > ab. \end{align*}\label{def: non-strict order}
\end{definition}

The atomic relation for \emph{plane orientation} used in this work is adopted from the constructive axiom set for Euclidean plane geometry introduced in~\cite{kellison}:
given the three points $a,b,$ and $c$, if the point $a$ lies to the left of the segment $bc$, then the atomic \emph{leftness} relation on points will be denoted by $\Left(a,bc)$. We use the atomic \emph{leftness} relation to define an apartness relation between a point a segment as follows. 
\begin{definition}[Apartness of a point and a segment]The point $a$ lies apart from the segment $bc$ if it is either to the left of the segment $bc$ or to the left of the segment $cb:$ \[ a \# bc := \Left(a,bc) \ \lor \ \Left(a,cb).\] \label{def:lsep}
\end{definition}

\subsection{The Constructive Interpretation of Classical Geometric Relations}\label{sec:classical-relations}
The classical relations of \emph{equivalence},
\emph{collinearity}, \emph{betweenness}, and \emph{congruence} are defined using the atomic relations of \emph{leftness} and \emph{strictly greater than}. In this section, we give the definitions of these relations, and provide the proof of a useful theorem as a simple example of proving stable propositions using constructive logic.

\begin{definition}[Equivalence on points]
The points $a$ and $b$ are \emph{equivalent} if they do not satisfy the binary apartness relation on points (Definition~\ref{def:papart}): \[a \equiv b := \neg a \# b.\] \label{def:eq}
\end{definition}
As is mentioned in Section~\ref{sec:stable}, equivalence and equality do not coincide: while equality on points implies equivalence, equivalence does not imply equality. 

\begin{definition}[Collinearity]
The points $a,b,$ and $c$ are \emph{collinear} if they do not satisfy the apartness relation between a point and a segment: \[\col(abc):= \neg(a \# bc).\] 
\label{def:col}
\end{definition}

\begin{definition}[Betweenness]The point $b$ lies \emph{between} the points $a$ and $c$ if $a,b,$ and $c$ are collinear and the length of the segment $ac$ is not strictly greater than the lengths of $ab$ and $bc$:\[ B(abc) := \col(abc) \ \wedge \ ac \ge ab \ \wedge \ ac \ge bc.\]
\label{def:bet}
\end{definition} 
Note that the above definition coincides with what is referred to as \emph{non-strict} betweenness. That is, the points $a,b,$ and $c$ may be equivalent.

\begin{theorem}[Collinear Cases] Any three collinear points satisfy a weak betweenness relation. 
\begin{align*}
    \forall &a,b,c:\p. \ \col(abc) \Rightarrow \\ &\neg \neg (B(abc) \ \lor \ B(cab)  \lor B(bca) \ \lor a \equiv b \ \lor a \equiv c \ \lor \ b \equiv c).
\end{align*}\label{thm:colcases}
\end{theorem}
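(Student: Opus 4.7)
The plan is to prove $\neg\neg D$ (writing $D$ for the six-way disjunction in the conclusion) by assuming $\neg D$ and deriving $\bot$; this is legitimate because $\neg\neg D$ is itself the negation of $\neg D$ and hence stable by Theorem~\ref{thm:negstable}. From $\neg D$ I will extract, in particular, the three facts $\neg B(abc)$, $\neg B(cab)$, and $\neg B(bca)$; the three non-equivalence disjuncts in $\neg D$ will not actually be needed for the contradiction.

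First I will promote the hypothesis $\col(abc)$ to its two cyclic permutations $\col(cab)$ and $\col(bca)$, using a standard reordering lemma derivable from the axioms governing $\Left$ together with the fact that apartness of a point and a segment is invariant under swapping the segment's endpoints (Definition~\ref{def:lsep}). With all three forms of collinearity in hand, each negated betweenness $\neg B(xyz)$ reduces to the negation of the conjunction of its two $\ge$ clauses, because the $\col(xyz)$ conjunct is now discharged. Unfolding $\ge$ via Definition~\ref{def: non-strict order}, each such statement has the shape $\neg(\neg P \wedge \neg Q)$, which is constructively equivalent to $\neg\neg(P \lor Q)$. This yields
\begin{align*}
\neg\neg(ab > ac \ \lor \ bc > ac),\quad \neg\neg(ca > cb \ \lor \ ab > cb),\quad \neg\neg(bc > ba \ \lor \ ca > ba).
\end{align*}

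Since the current goal is $\bot$, I can strip each of these three double negations (the standard stability trick) and split the three disjunctions into eight combined cases. In each case I derive $\bot$ from just two axioms on the atomic $>$ relation: asymmetry (which rules out any case directly containing both $xy > zw$ and $zw > xy$) and transitivity (which rules out the two remaining cyclic cases, such as $ab > ac$, $ac > bc$, $bc > ab$, by closing the cycle and then invoking asymmetry). Six of the eight cases dispatch immediately by asymmetry; the other two require one transitivity step first.

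The main obstacle will be bookkeeping rather than any single hard idea: I need the permutation-symmetry lemmas for $\col$ as well as the analogous symmetry for the length ordering under swapping segment endpoints (e.g.\ $ab > cd \leftrightarrow ba > dc$), and the eight cases must be enumerated systematically. Once those ancillary facts are in hand the eight-case contradiction is mechanical, and the argument mirrors the classical intuition that among any three collinear points the pair at maximum distance must be the endpoints while the third point lies between them.
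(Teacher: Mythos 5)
Your proposal is correct, and its core coincides with the paper's proof: both extract $\neg B(abc)$, $\neg B(cab)$, $\neg B(bca)$ from the negated conclusion, unfold each negated betweenness into a double-negated disjunction of strict inequalities, strip those double negations against the goal $\bot$, and close the resulting combined cases using transitivity and asymmetry of the atomic $>$ relation. The one genuine structural difference is at the front end: the paper first introduces $\neg\neg(a\#b \lor \neg a\#b)$ (and likewise for $a\#c$ and $b\#c$), dispatches the degenerate cases via the equivalence disjuncts $a\equiv b$, $a\equiv c$, $b\equiv c$, and does the real work only in the case where all three apartness relations hold; you skip that preliminary split entirely and observe that the equivalence disjuncts are never needed, since all eight combinations of the order disjuncts already yield a contradiction from asymmetry (six cases) or one transitivity step followed by asymmetry (the two cyclic cases), irrespective of whether the points are pairwise apart. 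That observation is sound and makes your version slightly leaner. A second, cosmetic difference: the paper keeps the negated collinearity conjunct inside the disjunction, obtaining $\neg\neg(a\#bc \ \lor \ ab>ac \ \lor \ bc>ac)$ and killing the $a\#bc$ disjunct against the hypothesis $\col(abc)$, whereas you discharge the $\col$ conjunct before forming the disjunction; either way one needs the cyclic permutation lemma $\col(abc)\Rightarrow\col(cab)$, which follows from Axiom~\ref{ax:leftcycle}, so this is bookkeeping only. The ancillary facts you flag (endpoint-swap invariance of $>$ and its asymmetry) are equally presupposed by the paper's terse ``expanding the disjunctions results in absurdity,'' so they are a shared dependency rather than a gap in your argument.
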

\begin{proof}
The stability of the conclusion allows for reasoning by cases on \[\neg \neg (a\#b \ \lor \ \neg a\#b),\] and similarly for $a \#c$ and $b \#c$. Consider the case where $a\#b$, $a\#c$, and $b\#c$. Assume \[\neg (B(abc) \ \lor \ B(cab) \ \lor \ B(bca) \ \lor \ a \equiv b \ \lor a \equiv c \ \lor \ b \equiv c),\] and prove false. Observe that  $\neg B(abc) \ \wedge \ \neg B(cab) \ \wedge \ \neg B(bca)$ follows from the assumption. From $\neg B(abc)$ it follows that \[\neg \neg (a\#bc \ \lor \ ab > ac \ \lor \ bc > ac) .\] Stability of the conclusion allows for elimination of the double negation for each betweenness relation, and expanding the disjunctions results in absurdity. 
\end{proof}

\begin{definition}[Strict Betweenness]The point $b$ lies \emph{strictly between} the points $a$ and $c$ if the point $b$ lies between the points $a$ and $c$, and the points $a,b,$ and $c$ satisfy apartness relations: \[ SB(abc) := B(abc) \ \wedge \ a\# b \ \wedge \ b \# c.\] 
\end{definition}

\begin{definition}[Congruence]The segments $ab$ and $cd$ are \emph{congruent} if they do not satisfy the apartness relation on segment lengths (Definition~\ref{def:lenapart}):\[ab \cong cd := \neg ab \# cd.\]
\label{def:cong}
\end{definition} 

\begin{definition}[Out] The point $p$ lies \emph{out} along the segment $ab$ if it is separated from both $a$ and $b$ and satisfies some weak betweenness relation with $a$ and $b$. Observe that this definition can be viewed as using an constructive interpretation of the classical disjunction used in Definition 6.1 of~\cite{metageo}.  
\begin{align*} out(p,ab) := p \# a \ \wedge \ p \# b \ \wedge \ \neg( \neg B(pab) \ \wedge \ \neg B(pba) )\end{align*}
\label{def:out}
\end{definition}

The universally quantified axioms introduced in Section~\ref{sec:axioms} imply that \emph{collinearity}, \emph{betweenness}, and \emph{congruence} are equivalence relations.

\subsection{Angle Relations}
Our proof of the Steiner-Lehmus theorem required constructive definitions for angle congruence, the sum of two angles, and angle ordering. The following definition of angle congruence is taken from Tarski~\cite{metageo}, but has been modified to use the appropriate constructive relations.
\begin{definition}[Congruent Angles]The angles $abc$ and $xyz$ are \emph{congruent} if the segments of each angle are distinct and there exist points making the corresponding segments of the two angles congruent: 
\begin{align*}
abc &\cong_a xyz  := \\  &a\#b \ \wedge\ b\#c\ \wedge\ x\#y\ \wedge \ y\#z \ \wedge \ \\ &(\exists a',c',x',z':\p. \ B(baa') \ \wedge \ B(bcc') \ \wedge \ B(yxx') \ \wedge \\ &B(yzz') \ \wedge \ ba' \cong yx' \ \wedge \ bc' \cong yz' \ \wedge \ a'c' \cong x'z').
\end{align*}\label{def:anglecong}
\end{definition}
The set of Axioms U, introduced in Section~\ref{sec:axioms}, imply that angle congruence is an equivalence relation. 

\begin{definition}[Sum of two angles]
\begin{align*}
abc \ &\texttt{+} \  xyz = def := \\ &\exists p,p',d',f':\p .\ abc\cong_a dep \  \wedge \ fep \cong_a xyz  \ \wedge \\ &B(ep'p) \ \wedge \ out(edd') \ \wedge \ out(eff') \ \wedge \ SB(d'p'f')
\end{align*}
\label{def:anglesum}
\end{definition}

\begin{figure}[h!]
\centering
     \includegraphics{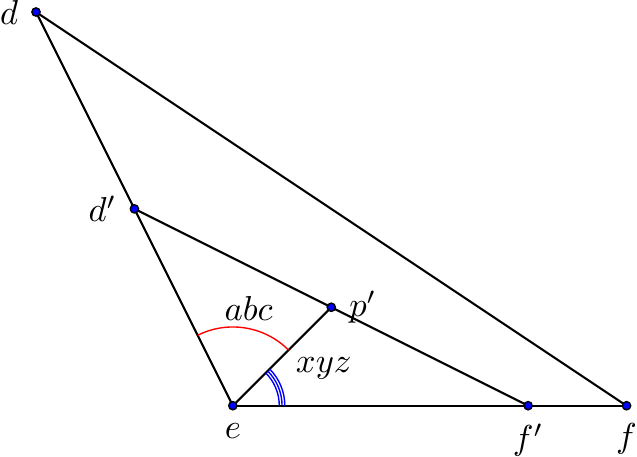}
     \caption{A diagram of Definition~\ref{def:anglesum}: $abc \ \texttt{+} \  xyz = def$ with $p'=p$.}
\end{figure}

\begin{definition}[Angle Inequality]
\begin{align*}
abc  &<_a  xyz  =:
    \neg{}out(y  xz) \ \wedge  \\
     & \exists{} \ p, p', x', z':\p. \
            abc  \cong_a  xyp  \ \wedge \\  &B(yp'p)  \ \wedge \   out(y  xx')  \ \wedge \ out(y  zz')  \ \wedge \\  &\neg B(xyp)  \ \wedge \  B(x'p'z)'  \ \wedge \  p'  \#  z'
\end{align*}
\label{def:angleineq}
\end{definition}

\begin{figure}[htp]
\subfloat[Definition \ref{def:angleineq} for typical angles, with $p = p'$.]{%
  \includegraphics[clip,width=0.6\columnwidth]{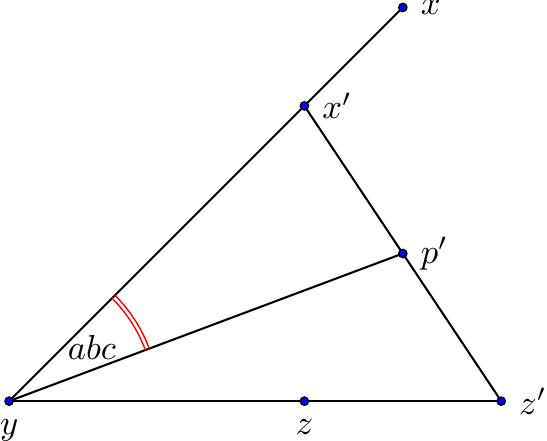}%
}
\hfill
\subfloat[Definition \ref{def:angleineq} for a straight angle.]{%
  \includegraphics[clip,width=0.6\columnwidth]{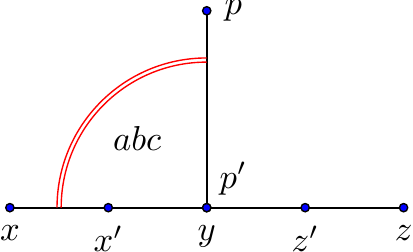}%
}

\caption{Definition \ref{def:angleineq}, $abc  <_a  xyz$.}

\end{figure}

Our axioms imply that angle inequality is a transitive relation for angles satisfying the ternary apartness relation on points (Definition~\ref{def:lsep}).

\subsection{Parallel Segments}
The following definition of parallel segments was essential to our proof of the Steiner-Lehmus theorem. 

\begin{definition}[Parallel Segments]The segments $ab$ and $cd$ are \emph{parallel} if $a\#b$ and $c\#d$ and there do not exist points $x$ and $y$ collinear with $ab$ such that $x$ and $y$ lie on opposite sides of $cd$:
\begin{align*}
ab &\parallel cd  :=  a \# b  \ \wedge \  c \# d \ \wedge \  \neg(\exists x,y : \p. \  \col(xab) \ \wedge \\  &\col(yab) \ \wedge  \Left(x,cd) \ \wedge \  \Left(y, dc)).
\end{align*}
\end{definition}

According to our axioms introduced in the following section, parallelism is a symmetric and reflexive relation but not a transitive relation. Transitivity of parallelism is known to be equivalent to the parallel postulate~\cite{parallelCoq}, which is not an axiom of the theory presented in this paper.

\section{Construction Postulates and Axioms}\label{sec:axioms}
The axioms are introduced here in two separate groups: Axioms U and Axioms C. Axioms U are universally quantified and contain no disjunctions or existential quantifiers. The application of any one of these axioms does not result in a geometric construction. Axioms C are constructor axioms relying on disjunctions and existential quantifiers. As a result, the axioms in group C have a convenient functional reading which may be used in proofs. 

From the Axioms listed below, Axioms \ref{ax:leftcycle}--\ref{ax:colinearlsep} and Axioms \ref{ax:co-trans}--\ref{ax:nontriv} are also used in a previous set of constructive axioms for Euclidean geometry~\cite{kellison}. Unlike the previous system, the relations of betweenness, congruence, and apartness are not primitives of our theory, they are instead defined as described in Section~\ref{sec:classical-relations}. Furthermore, as previously noted, the stability of congruence and betweenness are not taken as axioms in the current work. Finally, the axioms \ref{ax:SC} and \ref{ax:CC} have been simplified to remove the assertion of the existence of redundant points.

\subsection{Universally Quantified Axioms}\label{sec:uax}
\setcounter{axiomu}{0}
\begin{axiomu} 
$\forall{}a,b,c:\p .\    bc  \geq{}  aa$ \label{ax:genull}
\end{axiomu}

\begin{axiomu}$ \forall{}a,b,c,d:\p .\    ab>cd  {}\Rightarrow{}  ab  \geq{}  cd$
\end{axiomu}

\begin{axiomu}$\forall{}a,b,c:\p .\    ba>ac  {}\Rightarrow{}  b  \#  c 
$
\label{ax:gt-to-sep}
\end{axiomu}

\begin{axiomu} $ $\newline  $\forall{}a,b,c,d,e,f:\p .\    ab>cd  {}\Rightarrow{}  cd  \geq{}  ef  {}\Rightarrow{}  ab>ef
$ \label{ax:gt-trans}
\end{axiomu}

\begin{axiomu} $ $\newline
$\forall{}a,b,c,d,e,f:\p .\    ab  \geq{}  cd  {}\Rightarrow{}  cd>ef  {}\Rightarrow{}  ab>ef$
 \label{ax:ge-trans}\end{axiomu}

\begin{axiomu}  
$\forall{}a,b,c:\p .\    B(abc)  {}\Rightarrow{}  b  \#  c  {}\Rightarrow{}  ac>ab$
\label{ax:bet-to-gt}
\end{axiomu}

\begin{axiomu}   $\forall{}a,b,c:\p .\    \Left(a,bc)  {}\Rightarrow{}  \Left(b,ca)
$\label{ax:leftcycle}\end{axiomu}

\begin{axiomu} $\forall{}a,b,c:\p .\    \Left(a,bc)  {}\Rightarrow{}  b  \#  c
$\label{ax:lefttosep}
\end{axiomu}

\begin{axiomu} $\forall{}a,b,c,d:\p .\    B(abd)  {}\Rightarrow{}  B(bcd)  {}\Rightarrow{}  B(abc)$ \label{ax:last}
\end{axiomu}

We take an constructive versions of Tarski's Five-Segment axiom and Upper Dimension axiom~\cite{metageo}.

\begin{axiomu}[Five-Segment] 
\begin{equation*}
\begin{aligned}
\forall a,b,c,d,w,x,y,z:\p .\
              (a  \#  b  \ \wedge \ B(abc)  \ \wedge \ B(wxy)   \ \wedge \ \\ \quad ab  \cong{}  wx  \ \wedge \  bc  \cong{}  xy  \ \wedge \  ad  \cong{}  wz  \ \wedge \  bd  \cong{}  xz)  {}\Rightarrow{} \\  cd  \cong{}  yz \end{aligned}
\end{equation*}\label{ax:5seg}
\end{axiomu}

\begin{axiomu}[Upper Dimension]    
\begin{equation*}
\begin{aligned}\forall{}a,b,c,x,y:\p .\    ax  \cong{}  ay  {}\Rightarrow{}  bx  \cong{}  by  {}\Rightarrow{} \\ cx  \cong{}  cy  {}\Rightarrow{}  x  \#  y  {}\Rightarrow{} \col(abc)
\end{aligned}
\end{equation*}
\end{axiomu}

\begin{axiomu}[Convexity of Leftness] 
\begin{equation*}
\begin{aligned}
\forall{}a,b,x,y,z:\p .\    \Left(x,ab)  \ \wedge     \Left(y,  ab)   \ \wedge  B(xzy)  {}\Rightarrow{} \\  \Left(z,ab)
\end{aligned}
\end{equation*}
\label{ax:leftconvex}
\end{axiomu}

\begin{axiomu}
\[
\forall{}a,b,c,y:\p .\    a  \#  bc  {}\Rightarrow{}  y  \#  b  {}\Rightarrow{}  \col(y ab)  {}\Rightarrow{}  y  \#  bc 
\]
\label{ax:colinearlsep}

\end{axiomu}

\subsection{Construction Postulates}
\setcounter{axiomc}{0}

\begin{axiomc}[Cotransitivity of separated points:] 
\[\forall a,b,c: \p. \ a\#b \Rightarrow  a\#c \  \lor \ b\#c\] \label{ax:co-trans}
\end{axiomc}

\begin{axiomc}[Plane Separation] If the points $u$ and $v$ lie on opposite sides of the segment $ab$, then the point $x$, collinear with $ab$, exists between $u$ and $v$.
\begin{equation*}
\begin{aligned}
\forall a,b,u,v: \p. \ (\ \Left(u,ab) \wedge \ \Left(v,ba) \  \Rightarrow \\ \exists x:\p. \ Col(abx) \ \wedge \ B(uxv))
\end{aligned}
\end{equation*}
\label{ax:planesep}
\end{axiomc}    

\begin{figure}[h!]
\centering
     \includegraphics[scale=1.0]{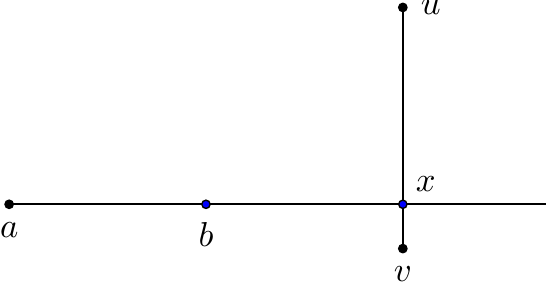}
     \caption{Axiom \ref{ax:planesep}: plane separation.}
     \label{fig:SS}
\end{figure}

\begin{axiomc}[Non-triviality]
\[
\exists a,b: \p.\ a  \#  b
\label{ax:nontriv}
\]
\end{axiomc}    

\begin{axiomc}[Straightedge-Compass] The straight-edge compass axiom constructs a single point of intersection between a circle and a segment (see Figure ~\ref{fig:SC}):
\begin{equation*}
\begin{aligned}
    \forall{}a,b,c,d: \p. \ (a  \#  b  \wedge{}  B(cbd)) \Rightarrow\\
                  \exists{}u:\p .\  cu  \cong{}  cd \  \wedge{} \ B(abu)  \wedge{}  (b  \#  d  {}\Rightarrow{}  b  \#  u)
\end{aligned}
\end{equation*}\label{ax:SC}
\end{axiomc}    

\begin{figure}[h!]
\centering
     \includegraphics[scale=1.0]{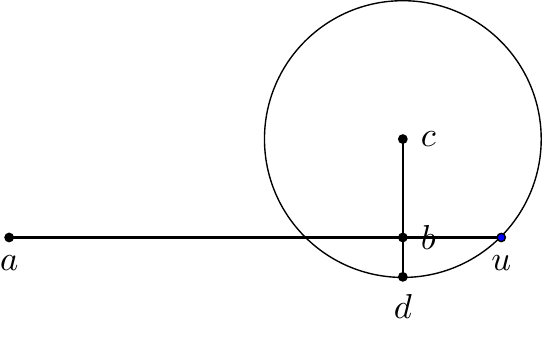}
     \caption{Axiom \ref{ax:SC}, the straight-edge compass construction.}
     \label{fig:SC}
\end{figure}

\begin{axiomc}[Compass-Compass] The compass-compass axioms constructs a single point of intersection between two circles (see Figure ~\ref{fig:CC}): 
\begin{equation*}
\begin{aligned}
\forall{}a,b,c,d:\p. \ a  \#  c \ \wedge \\ (\exists{} \ p,q:\p. \ ab  \cong{}  ap  \wedge{}  cd>cp  \wedge{}  cd  \cong{}  cq  \wedge{}  ab>aq)
 \Rightarrow \\
                  \exists{}u:\p.\  ab  \cong{}  au  \wedge{}  cd  \cong{}  cu  \wedge{}  \Left(u, ac)\label{ax:CC}
\end{aligned}
\end{equation*}
\end{axiomc}    

\begin{figure}[h!]
\centering
     \includegraphics[scale=1.0]{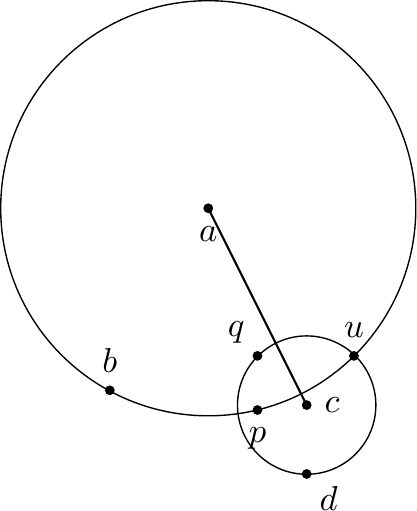}
     \caption{Axiom \ref{ax:CC}, the compass-compass construction.}
     \label{fig:CC}
\end{figure}

\section{The Steiner-Lehmus Theorem}\label{sec:SLT}
The conclusion of the Steiner-Lehmus theorem is stable and so it suffices to prove the double negation of auxiliary theorems with constructive content. Thus, rather than proving a lemma stating that
\begin{center} \emph{from two points along the sides of any triangle, a parallelogram can be constructed such that one side of the parallelogram lies along one side of the triangle,}
\end{center}
we prove the following lemma: 
\begin{lemma}
\begin{equation*}
\begin{aligned}
\forall a,b,c,x,y:\p. \ ( a \#bc \ \wedge \ SB(axb) \ \wedge \ SB(cyb) \Rightarrow \\ \neg \neg (\exists t:\p. \  yt \ || \ ax \ \wedge \ xt \ || \ ay \ \wedge \\ 
ax \cong yt \ \wedge \ xt \cong ay \ \wedge \ t \# bc).
\end{aligned}
\end{equation*}
\label{lem:parallelogram}
\end{lemma}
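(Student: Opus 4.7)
The plan is to exploit the stability of the conclusion: since the goal is a double negation, Theorem~\ref{thm:negstable} ensures it is stable, so I may freely case-split under double negation on any atomic apartness claim. From $SB(axb)$ I extract $a \# x$, $x \# b$, and $\col(axb)$; from $SB(cyb)$, I obtain $c \# y$, $y \# b$, and $\col(cyb)$. Combining $\col(axb)$ with $a \# bc$ via Axiom~\ref{ax:colinearlsep} yields $x \# bc$, while $\col(cyb)$ places $y$ on line $bc$. Under the double negation I also assume $x \# y$: the contrary case $x \equiv y$ would make $x$ both apart from and collinear with $bc$, a contradiction.

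I construct $t$ as the fourth vertex of the parallelogram via the midpoint of $xy$. First obtain the midpoint $m$ by applying the Compass-Compass postulate (Axiom~\ref{ax:CC}) twice with equal radii $xy$ centered at $x$ and at $y$, yielding points $p, p'$ with $\Left(p, xy)$ and $\Left(p', yx)$ that are both equidistant from $x$ and $y$; then locate $m$ on segment $pp'$ collinear with $xy$ via Plane Separation (Axiom~\ref{ax:planesep}). Symmetry of the construction gives $xm \cong my$. Next use the Straightedge-Compass postulate (Axiom~\ref{ax:SC}) along the ray from $a$ through $m$ to produce $t$ with $B(amt)$ and $mt \cong am$; the required $a \# m$ holds because $a$ lying on line $xy$ would force $y \equiv b$, contradicting $SB(cyb)$. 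By construction, the diagonals $at$ and $xy$ of the quadrilateral $axty$ share midpoint $m$.

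From this common-midpoint property and the Five-Segment axiom (Axiom~\ref{ax:5seg}) applied to the two pairs of triangles meeting at $m$, the side congruences $ax \cong yt$ and $ay \cong xt$ follow. For the parallelism $yt \parallel ax$, suppose toward refutation that points $u, v$ collinear with $yt$ satisfy $\Left(u, ax)$ and $\Left(v, xa)$; Plane Separation produces a crossing of line $yt$ with line $ax$, but the congruence $yt \cong ax$ and the diagonal-bisection symmetry through $m$ together force this crossing to coincide with $m$, which is impossible since $m$ is interior to segment $xy$, not to $ax$. The argument for $xt \parallel ay$ is symmetric. Finally, $t \# bc$ follows because $y$ is collinear with $bc$, and the displacement from $y$ to $t$ matches that from $x$ to $a$ — a direction not along $bc$ since $a \# bc$ and $x \# bc$ — so Axiom~\ref{ax:colinearlsep} together with some apartness propagation yields $t \# bc$.

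The \textbf{main obstacle} will be the parallelism step: the constructive definition of $\parallel$ demands refuting the existence of a transversal with intersections on opposite sides, which cannot be read off the compass construction. Extracting the contradiction requires combining the diagonal-bisection symmetry of the constructed parallelogram, the Five-Segment axiom, and careful use of Plane Separation; pinning down the right instance of these lemmas, together with the case analysis needed to place $t$ on the correct side of line $xy$, will be the crux.
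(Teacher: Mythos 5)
Your construction of $t$ is the same as the paper's: bisect $xy$ at $m$ (the paper simply invokes Euclid I.10, Theorem~\ref{thm:E10}, rather than rebuilding it from Axioms~\ref{ax:CC} and~\ref{ax:planesep}) and then double $am$ to $t$ with $am \cong mt$. But three of the steps you need afterwards have genuine gaps.

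First, the side congruences. The Five-Segment axiom cannot produce $ax \cong yt$ on its own: in every instantiation along the lines $xmy$ or $amt$, the hypothesis $ad \cong wz$ is exactly the congruence $ax \cong yt$ you are trying to prove, so the argument is circular. The missing ingredient is the vertical-angle congruence $xma \cong_a ymt$ (Euclid I.15, Theorem~\ref{thm:E15}), which together with SAS (Euclid I.4, Theorem~\ref{thm:E4}) gives $ax \cong yt$; only then does Axiom~\ref{ax:5seg} deliver the second congruence $xt \cong ay$. This is the order the paper follows.

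Second, the parallelism, which you correctly flag as the crux but do not resolve. Your sketch — that a transversal crossing of lines $yt$ and $ax$ would be ``forced to coincide with $m$'' — does not follow from anything you have established: $m$ lies on neither line $ax$ nor line $yt$, and no symmetry principle in this axiom system pins a hypothetical intersection point to $m$. The paper's route is different and is the step you are missing: from the segment congruences one reads off the alternate-angle congruence $axy \cong_a tyx$ directly from Definition~\ref{def:anglecong}, and then Euclid I.27 (Theorem~\ref{thm:E27}) yields $ax \parallel yt$ and $xt \parallel ay$ without ever having to refute a transversal by hand.

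Third, $t \# bc$. Your appeal to ``the displacement from $y$ to $t$ matching that from $x$ to $a$'' is not a notion expressible in this system, and Axiom~\ref{ax:colinearlsep} does not apply, since $t$ is not collinear with $a$ and $b$. The paper instead uses stability to case-split on $t \# bc$ versus $\col(tbc)$ and derives a contradiction in the second case: by Theorem~\ref{lem:SOP} there is a point $p$ with $SB(bpc)$ and $SB(amp)$, intersection unicity (Lemma~\ref{lem:consuni}) forces $t \equiv p$, and then extending to a point $q$ with $SB(cbq)$ and $SB(ypq)$ puts $q$ on the left of $ax$ (Lemmas~\ref{lem:outleft} and~\ref{lem:lefttoright}), contradicting the already-established $ax \parallel yt$. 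None of this machinery appears in your proposal, so the final conjunct of the lemma is unproved.
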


\begin{proof}
Construct the midpoint $m$ along the segment $xy$ using Euclid I.10 (Theorem~\ref{thm:E10}), and extend the segment $am$ to construct the point $t$ such that $am \cong mt$ by Lemma~\ref{lem:SE}. Now, the angle congruence  $xma \cong_a ymt$ follows from Euclid I.15 (Theorem~\ref{thm:E15}), and the congruence relations $ax \cong yt$ and $xt \cong ay$ follow from Euclid I.4 (Theorem~\ref{thm:E4}) and Axiom~\ref{ax:5seg}, respectively. The angle congruence $axy \cong_a tyx$ then follows by definition, and from Euclid I.27 (Theorem~\ref{thm:E27}) it follows that $ax \parallel yt$ and $xt \parallel ay$. Finally, stability of the conclusion allows for reasoning by cases on $t\#bc$ or $\col(tbc)$.
\\ 

If $\col(tbc)$ then by Lemma~\ref{lem:consuni} the point $t$ must be the point $p$ such that $SB(bpc)$ and $SB(amp)$; $p$ is guaranteed to exist by construction using Lemma~\ref{lem:SOP}. Without loss of generality, from $a\#bc$ assume
$\Left(a, cb)$. From Lemma~\ref{lem:outleft}, it follows that $\Left(c, xa)$. Now, construct the point $q$ by Lemma~\ref{lem:SE} such that $SB(cbq)$ and $SB(ypq)$. It follows from Lemma~\ref{lem:lefttoright} that $\Left(q, ax)$, contradicting  $ax \parallel yt$. 
\end{proof}

\begin{figure}[h!]
\centering
     \includegraphics{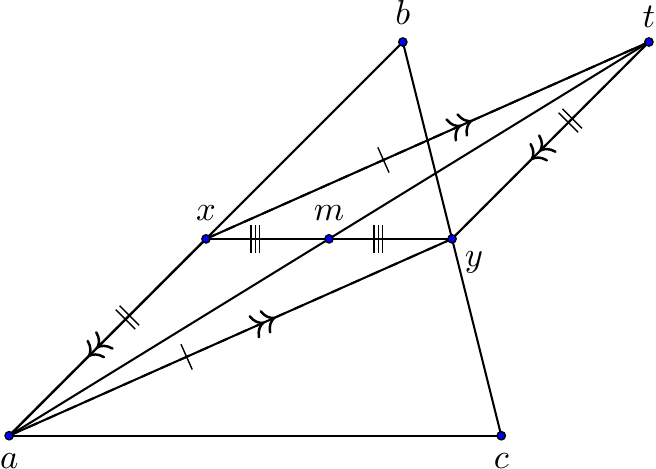}
     \caption{Lemma \ref{lem:parallelogram}}
\end{figure}

\begin{theorem}[Steiner-Lehmus] 

\begin{align*}
\forall a,b,c,x,y:\p. \ ( a \#bc \ \wedge \ SB(axb) \ \wedge \ SB(cyb) \ \wedge \\ ay \cong cx \ \wedge \ xay \cong_a cay \ \wedge \ ycx \cong_a acx \Rightarrow ab \cong cb ). 
\end{align*}
\label{thm:SL}
\end{theorem}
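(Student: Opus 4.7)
My plan is to exploit the observation that the conclusion $ab \cong cb$ is, by Definition~\ref{def:cong}, the negation $\neg (ab \# cb)$, so the theorem is a \emph{proof of negation} and case distinctions on $ab \# cb$ are constructively admissible. I would therefore assume $ab \# cb$ and aim to derive falsity. By Definition~\ref{def:lenapart} this assumption splits into two cases, $ab > cb$ and $cb > ab$, and these are symmetric under exchanging the roles of $a$ with $c$ and $x$ with $y$ (the hypotheses $ay \cong cx$, the two bisector conditions, and $SB(axb)$, $SB(cyb)$ all swap cleanly), so it suffices to handle one.

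First I would invoke Lemma~\ref{lem:parallelogram} to obtain, under a double negation that may be stripped because the goal is a negation and hence stable, a point $t$ with $axty$ a parallelogram: $ax \cong yt$, $xt \cong ay$, $yt \parallel ax$, $xt \parallel ay$, and $t \# bc$. Composing the parallelogram congruence $xt \cong ay$ with the hypothesis $ay \cong cx$ gives $xt \cong cx$, making triangle $xct$ isosceles at $x$; the apartness $t \# bc$ ensures this triangle is nondegenerate. The same construction performed with the roles of the two bisectors swapped produces a second parallelogram $cys'x$ (say) with $ys' \cong cx \cong ay$, yielding a second isosceles triangle whose base angles can be related to the bisector at $a$.

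The heart of the proof is then an angle chase, using the parallelism $xt \parallel ay$ to transfer the bisected half-angle at $a$ (via the hypothesis $\angle xay \cong_a \angle cay$) into an angle at $x$ of the isosceles triangle $xct$, and similarly using the second parallelogram to transfer the half-angle at $c$ into the mirror configuration. In the assumed case $ab > cb$, the strict inequality of sides should propagate, via Axioms \ref{ax:bet-to-gt}, \ref{ax:gt-trans}, and \ref{ax:ge-trans} together with the angle-sum (Definition~\ref{def:anglesum}) and angle-inequality (Definition~\ref{def:angleineq}) machinery, to a strict angle inequality between the two base angles of one of the isosceles triangles, contradicting the equality forced by $xt \cong cx$. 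The main obstacle I anticipate is making every link of that chain constructive: in the classical presentation of Hogg's argument, alternate-interior-angle facts, base-angle equality of isosceles triangles, and the monotonic coupling of side and opposite angle are all used tacitly, and each must here be discharged from the stable, positively stated primitives of Sections \ref{sec:relations}--\ref{sec:axioms} without slipping into reductio ad absurdum via some auxiliary construction — which is precisely the pitfall that has defeated historical ``direct'' proofs.
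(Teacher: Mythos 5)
Your overall architecture matches the paper's in several respects --- you correctly observe that $ab \cong cb$ is a negation and hence stable, you invoke Lemma~\ref{lem:parallelogram}, and you identify the isosceles triangle $xct$ via $xt \cong ay \cong cx$ exactly as the paper does (its Euclid I.5 step). But there is a genuine gap in where you place the case distinction. You case on $ab \# cb$ and promise that $ab > cb$ ``should propagate'' to a contradictory angle inequality. That propagation is the entire difficulty, and it is not supported by the auxiliary theorems available: to get from $ab > cb$ to anything the parallelogram configuration can use, you would need Euclid I.18 applied to the \emph{whole} triangle $abc$ to compare the full angles at $a$ and $c$, then a monotonicity lemma for angle bisection (halves of a strictly larger angle are strictly larger), and then Euclid I.24 to convert the half-angle inequality plus $ay \cong cx$ into an inequality between $ax$ and $cy$. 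None of the last two is among the theorems of Section~\ref{sec:ess_theorems}, and the bisection-monotonicity step in particular is a nontrivial constructive lemma you have not discharged. Your sketch of ``a second parallelogram with the roles swapped'' does not substitute for these missing links.

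The paper avoids all of this by casing directly on $cy > ax$, $ax > cy$, or $ax \cong cy$ (legitimate because the goal is stable). In the strict cases, $cy > ax$ immediately gives $cy > yt$ inside the parallelogram, so Euclid I.25 (applied to triangles $acx$ and $cay$, using the hypothesis $ay \cong cx$) and Euclid I.18 (applied to the small triangle $tyc$, not to $abc$) combine with Lemma~\ref{lem:anglesumineq} and the angle sums at $t$ and $c$ to produce $xcy <_a xty$ and $xty <_a xcy$ simultaneously. In the case $ax \cong cy$ the conclusion is obtained \emph{positively}: SSS gives $xac \cong_a yca$ and Euclid I.6 (Theorem~\ref{thm:E6}) yields $ab \cong cb$ outright --- a case your proof-of-negation framing never produces and never needs, but whose absence is precisely why you are forced onto the harder, under-lemmatized path from $ab > cb$. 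To repair your proposal, replace the case split on $ab \# cb$ with the split on the comparison of $ax$ and $cy$; as written, the central deduction of falsity is asserted rather than proved.
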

\begin{proof}
Construct the parallelogram $ayxt$ by Lemma \ref{lem:parallelogram}. From Euclid I.5 (Theorem~\ref{thm:E5}) it follows that $xct \cong_a xtc$. The angle sum relations $xty \ \texttt{+} \ ytc \cong_a xtc$ and $xcy\ \texttt{+}\ yct \cong_a xct$ follow by definition from construction of the point $q$ using Axiom~\ref{ax:planesep} such that $SB(qyc)$, $B(tyy)$, $SB(xqt)$, and $B(cqq)$. Now, stability of the conclusion allows for reasoning by cases on $cy >ax$  or $\neg(cy >ax)$. 

If ${cy >ax}$, then ${cy >yt}$ by definition of the parallelogram ${ayxt}$. From Euclid I.25 (Theorem~\ref{thm:E25}) it follows that ${acx <_a cay}$, and therefore ${xcy <_a xty}$. It then follows from Euclid I.18 (Theorem~\ref{thm:E18}) that ${ tcy <_a ytc}$, which, along with Lemma~\ref{lem:anglesumineq} and the angle sum relations ${xty \ \texttt{+} \ ytc \cong_a xtc}$ and ${xcy\ \texttt{+}\ yct \cong_a xct}$, yields the contradiction ${xty <_a xcy}$.

If $\neg(cy >ax)$, it follows that ${\neg \neg (ax >cy \lor ax \cong cy)}$: stability of the conclusion allows for elimination of the double negation, so that we can reason by cases on $ax >cy$ or  $ax \cong cy$. A contradiction is reached for $ax >cy$ by the same reasoning used for $cy >ax$.  

Finally, if  $ax \cong cy$, it follows from Definition ~\ref{def:anglecong} that $xac \cong_a yca$. Theorem ~\ref{thm:E6} then yields $ab \cong cb$, as desired.

\end{proof}

\begin{figure}[h!]
\centering
     \includegraphics[scale=1.0]{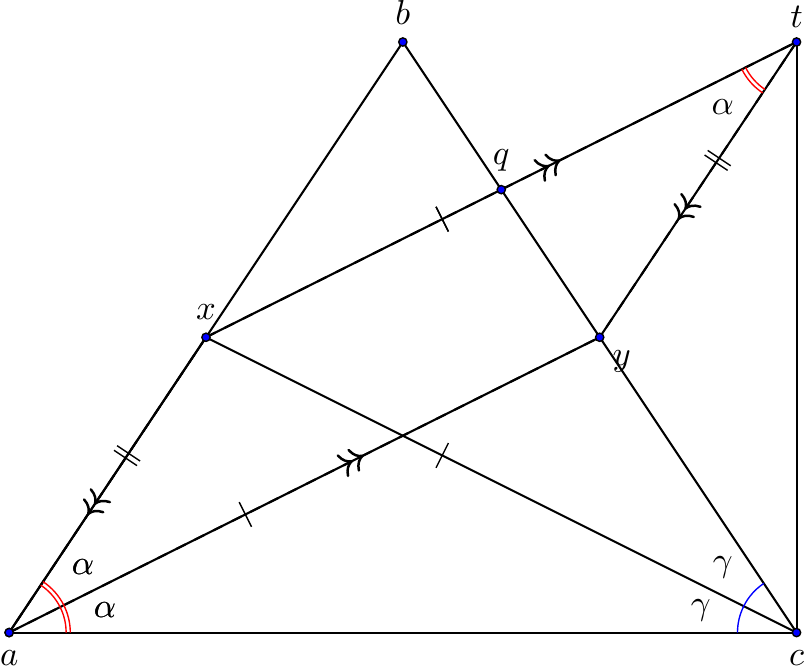}
     \caption{The Steiner-Lehmus Theorem \ref{thm:SL}}
\end{figure}

\section{Essential Auxiliary Theorems}\label{sec:ess_theorems}
This section contains only the statements of the auxiliary theorems used in the proof of the Steiner-Lehmus theorem (Theorem~\ref{thm:SL}) and Lemma~\ref{lem:parallelogram}. The names given to the theorems in this section match their names in the Nuprl library\footnote{\url{http://www.nuprl.org/LibrarySnapshots/Published/Version2/Mathematics/euclidean!plane!geometry/index.html}}. Some definitions used in the Nuprl statement of a theorem may occur unfolded in the following theorem statements for clarity.

\begin{theorem}[geo-extend-exists]
\[\forall q, a, b, c :\p.\ q \# a \Rightarrow \exists x:\p. \ B(qax) \ \wedge \ ax \cong bc.\]
\label{lem:SE}
\end{theorem}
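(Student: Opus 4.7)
The plan is to apply the Straightedge-Compass axiom (Axiom~\ref{ax:SC}) as the principal tool. SC constructs a point $u$ on the ray from $a$ through $b$ extended beyond $b$, satisfying $cu \cong cd$ whenever $a \# b$ and $B(cbd)$. I would instantiate SC's $a$ and $b$ with our $q$ and $a$, so that the hypothesis $q \# a$ is immediately in hand and the conclusion $B(abu)$ matches the desired $B(qax)$ with $x := u$. The subtle part is choosing SC's $c$ and $d$: I would take $c := a$ and pre-construct an auxiliary $d$ with $ad \cong bc$, so that $cu \cong cd$ collapses directly to $au \cong ad \cong bc$.

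Under this choice the SC hypothesis $B(cbd)$ becomes the degenerate $B(aad)$, which unfolds to $\col(aad) \wedge ad \geq aa \wedge ad \geq ad$. The middle conjunct is immediate from Axiom~\ref{ax:genull}, and the other two reduce to irreflexivity of $>$. Irreflexivity is itself a one-line derivation from Axiom U2: if $ab > ab$ then by U2 also $ab \geq ab$, i.e.\ $\neg (ab > ab)$, contradiction. For $\col(aad)$, cycling $\Left$ with Axioms~\ref{ax:leftcycle} and~\ref{ax:lefttosep} shows that each of $\Left(a,ad)$ and $\Left(a,da)$ would entail $a \# a$, which irreflexivity rules out. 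Hence $B(aad)$ will hold for any $a, d$.

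The real work will be constructing the auxiliary $d$ with $ad \cong bc$ --- a segment-transport problem based at $a$ with no preferred direction. I would attack it by extracting a point $e$ with $e \# a$ from Axiom~\ref{ax:nontriv}, then using a secondary application of SC (or, for configurations where SC's betweenness hypothesis cannot be arranged directly, the compass-compass Axiom~\ref{ax:CC}) along the auxiliary ray from $e$ through $a$ to deposit a point at the prescribed distance from $a$. Only once $d$ is in hand would I invoke the principal application of SC described above.

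The main obstacle I foresee is precisely this transport step: it must be non-circular (not a secret reuse of the theorem we are proving) and must handle the possibly-degenerate case $b \equiv c$ uniformly --- where $d := a$ would suffice, but cannot be selected by a decidable case split on apartness. I expect the extra clause $b \# d \Rightarrow b \# u$ in SC, which looks tailored for exactly such degenerate inputs, to be the key ingredient in packaging the degenerate and non-degenerate regimes into a single constructive existential.
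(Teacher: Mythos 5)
The paper does not print a proof of this theorem --- Section~\ref{sec:ess_theorems} explicitly gives only statements, with the proofs living in the Nuprl library --- so your proposal has to be judged on its own terms. The first half of it is correct and well executed: instantiating Axiom~\ref{ax:SC} with the ray $qa$ and the degenerate center $c_{SC} := a = b_{SC}$ is a sound move, and your verification that $B(aad)$ holds unconditionally (Axiom~\ref{ax:genull} for $ad \geq aa$, irreflexivity of $>$ from Axiom U2 for $ad \geq ad$, and the double application of Axiom~\ref{ax:leftcycle} followed by Axiom~\ref{ax:lefttosep} to refute $\Left(a,ad)$ and $\Left(a,da)$) is exactly right. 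This cleanly reduces the theorem to: construct $d$ with $ad \cong bc$.

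The gap is that this residual obligation is Euclid I.2 --- uniform transport of the segment $bc$ to the base point $a$ --- which is essentially the entire mathematical content of the theorem, and your plan for it does not work as described. Any ``secondary application of SC'' that lays off a prescribed length from $a$ along some ray needs, as the witness $d_{SC}$ for the hypothesis $B(c_{SC}\,b_{SC}\,d_{SC})$, a point that is \emph{already} at that distance from the center and collinear with the ray; that is the very point being constructed, so the step is circular. The only non-circular instantiations of Axiom~\ref{ax:SC} are the degenerate-center ones you already used, and they only extend a ray $xy$ beyond $y$ by a length measured from $y$ itself --- they cannot move a length to a new base point. Closing the gap requires real machinery: an equilateral triangle or midpoint construction via Axiom~\ref{ax:CC} (or Theorem~\ref{thm:E10}), a point-reflection argument through Axiom~\ref{ax:5seg} to carry $bc$ over to $a$, and a uniform treatment of the degenerate configurations ($b \equiv c$, $a$ not known apart from $b$, etc.). You correctly sense that last difficulty, but note that no stability argument can rescue it here: the conclusion is a genuine existential, not a negative (hence stable) proposition, so the double-negation case splits used elsewhere in the paper are unavailable and the construction must be a single uniform algorithm.
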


\begin{theorem}[Euclid-Prop4] If two triangles have two sides equal to two sides respectively, and have the angles contained by the equal straight lines equal, then they also have the base equal to the base, the triangle equals the triangle, and the remaining angles equal the remaining angles respectively, namely those opposite the equal sides.
\begin{equation*}
\begin{aligned}
    \forall{}a,b,c,x,y,z:\p. \
    a \#b \ \wedge a \# c \ \wedge b \#c \ \wedge{} \ x\#y \ \wedge{} \ x\#z  \ \wedge{} \\ y\#z 
   \ \wedge{} \  ab  \cong{}  xy\  \wedge{} \ bc  \cong{}  yz\  \wedge{}\  abc  \cong_a  xyz
    {}\Rightarrow{}  \\ ac  \cong{}  xz \ \wedge{} \ bac  \cong_a  yxz \ \wedge{} \ bca  \cong_a  yzx.
\end{aligned}
\end{equation*}
\label{thm:E4}
\end{theorem}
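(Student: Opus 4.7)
The plan is to follow the classical outline of SAS, carefully managed to avoid \emph{reductio ad absurdum}. Each of the three conclusions is a stable negation, so case analysis and proof of negation remain constructively valid. First, I would unfold the hypothesis $abc \cong_a xyz$ via Definition~\ref{def:anglecong} to obtain auxiliary points $a', c', x', z'$ satisfying $B(baa'), B(bcc'), B(yxx'), B(yzz')$ together with the congruences $ba' \cong yx'$, $bc' \cong yz'$, and $a'c' \cong x'z'$. Combined with the hypotheses $ab \cong xy$ and $bc \cong yz$, the betweenness axioms (notably Axioms~\ref{ax:gt-trans}--\ref{ax:bet-to-gt}) yield the ``segment subtraction'' congruences $aa' \cong xx'$ and $cc' \cong zz'$.

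The core of the proof is deriving $ac \cong xz$ via the Five-Segment axiom (Axiom~\ref{ax:5seg}). Since the target segment $ac$ has its endpoints in the ``middle'' positions of the collinear triples $B(baa')$ and $B(bcc')$, a direct application of Five-Segment to those triples tends to conclude already-known facts such as $a'c' \cong x'z'$ rather than the unknown $ac \cong xz$. To circumvent this, I would extend $ab$ past $b$ to a point $p$ and $xy$ past $y$ to a point $q$ via Theorem~\ref{lem:SE}, chosen so that $pb \cong qy$, and then apply Five-Segment to the collinear triples $B(a,b,p)$ and $B(x,y,q)$ with apexes $c$ and $z$; this orientation places the target $ac$ in the axiom's conclusion slot, yielding $ac \cong xz$ provided the auxiliary congruence $pc \cong qz$ can be established. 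The latter follows by a second Five-Segment application using the unfolded angle congruence (notably $a'c' \cong x'z'$) together with the segment-subtraction congruences derived above.

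With $ac \cong xz$ established, the remaining angle congruences $bac \cong_a yxz$ and $bca \cong_a yzx$ follow by directly witnessing the existential claims in Definition~\ref{def:anglecong}: construct auxiliary points via Theorem~\ref{lem:SE} along the rays at $a, c$ (and correspondingly at $x, z$), and verify the requisite segment congruences using $ab \cong xy$, $bc \cong yz$, and the now-proved $ac \cong xz$. The principal obstacle will be orchestrating the chain of Five-Segment applications so that each invocation advances an unknown congruence rather than recovering a hypothesis, and discharging the apartness preconditions of Axiom~\ref{ax:5seg} at each step (e.g., $p \# b$, $a' \# a$) from the given pairwise apartness of the triangle vertices together with Axioms~\ref{ax:gt-to-sep} and~\ref{ax:bet-to-gt}.
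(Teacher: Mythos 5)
The paper states Theorem~\ref{thm:E4} without proof (the proof lives in the Nuprl library), so your proposal has to be judged against the standard Tarski-style derivation of SAS from the Five-Segment axiom, which is what that formalization follows. Your overall strategy is the right one --- unfold Definition~\ref{def:anglecong}, derive the ``subtraction'' congruences, and reach $ac \cong xz$ by arranging Five-Segment applications whose conclusion slot contains the unknown --- and you correctly observe that stability of $\cong$ licenses the case distinctions needed for degenerate configurations. But the pivotal step is underspecified in a way that fails if executed literally. To conclude $ac \cong xz$ from the triple $B(p,b,a)$ with apex $c$ you need $pc \cong qz$, and you claim this follows by ``a second Five-Segment application using $a'c' \cong x'z'$ together with the segment-subtraction congruences.'' It cannot: any application of Axiom~\ref{ax:5seg} concluding $pc \cong qz$ needs a collinear triple ending at $p$ (hence lying on line $ab$) whose other two points have \emph{already-known} distances to the apex $c$, and at that stage the only point of line $ab$ with a known distance to $c$ is $b$ itself (via $bc \cong yz$); the congruences you cite, $a'c' \cong x'z'$ and $bc' \cong yz'$, involve $c'$, which does not lie on line $ab$.

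The missing idea is that congruence must first be transferred along one arm of the angle before the other can be treated: using the collinear triple $B(b,c,c')$ with apex $a'$ --- where two distances, $ba' \cong yx'$ and $c'a' \cong z'x'$, \emph{are} known --- an inner-five-segment argument yields the intermediate chord congruence $a'c \cong x'z$. Only then does line $ab$ carry two points ($b$ and $a'$) with known distances to $c$, after which your extension-to-$p$ trick (which is precisely how the inner five-segment is derived from Axiom~\ref{ax:5seg}) goes through, giving first $pc \cong qz$ and then $ac \cong xz$; unrolled, this is roughly four applications of Axiom~\ref{ax:5seg}, not two. Two smaller corrections: the subtraction congruences $aa' \cong xx'$ and $cc' \cong zz'$ do not follow from Axioms~\ref{ax:gt-trans}--\ref{ax:bet-to-gt} alone, since those axioms only compare lengths and never equate them --- they are themselves degenerate five-segment instances with the apex taken on the line. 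And for the final angle congruences, since betweenness here is non-strict you may simply take $b,c$ (resp.\ $y,z$) as their own witnesses in Definition~\ref{def:anglecong}, so no further construction via Theorem~\ref{lem:SE} is needed once $ac \cong xz$ is in hand.
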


\begin{theorem}[Euclid-Prop5] In isosceles triangles the angles at the base equal one another, and, if the equal straight lines are produced further, then the angles under the base equal one another.

\begin{align*}
    \forall{}a,b,&c,x,y:\p. \
    ab  \cong  ac  \wedge{}  a  \#  bc  \wedge{}  SB(abx) \  \wedge{} \\ &SB(acy)  {}\Rightarrow{}  abc  \cong_a  acb  \wedge{}  xbc  \cong_a  ycb.
\end{align*}
\label{thm:E5}
\end{theorem}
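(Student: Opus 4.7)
The plan is to prove both conclusions by a Pappus-style application of the SAS congruence theorem (Theorem~\ref{thm:E4}), viewing the triangle $abc$ as congruent to its mirror image $acb$ via the vertex $a$. This avoids any need for auxiliary constructions for the interior-angle half, and reduces the exterior-angle half to a supplementary-angle argument.

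First I would extract the apartness data. The hypothesis $SB(abx)$ unfolds to give $a \# b$ (and also $b \# x$), and similarly $SB(acy)$ gives $a \# c$. From $a \# bc$ (which by Definition~\ref{def:lsep} means $\Left(a,bc) \lor \Left(a,cb)$), Axiom~\ref{ax:lefttosep} yields $b \# c$ in each disjunct, and since $b \# c$ is a stable (negative) proposition this case split is legitimate. This gives the six apartness premises needed to invoke Theorem~\ref{thm:E4} on either of the two triangles.

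For the first conclusion $abc \cong_a acb$, I would apply Theorem~\ref{thm:E4} with the matching $(a_1,b_1,c_1) \mapsto (b,a,c)$ and $(x,y,z) \mapsto (c,a,b)$. The side premises $ba \cong ca$ and $ac \cong ab$ are just $ab \cong ac$ and its reverse. The angle premise is the self-congruence $\angle bac \cong_a \angle cab$, which I would establish directly from Definition~\ref{def:anglecong} by using Theorem~\ref{lem:SE} to extend $ab$ past $b$ and $ac$ past $c$ to fixed points $b',c'$ with $ab' \cong ac'$, and then taking the \emph{same} witnesses $(b',c')$ on both sides of the congruence; the final segment congruence $b'c' \cong c'b'$ collapses to reflexivity. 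Theorem~\ref{thm:E4} then delivers both the (uninteresting) side congruence $bc \cong cb$ and the desired $abc \cong_a acb$.

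For the second conclusion $xbc \cong_a ycb$, the idea is that $\angle xbc$ and $\angle ycb$ are the supplements, along the straight lines $axb$ and $acy$ respectively, of the interior base angles just shown to be congruent. Concretely, I would either invoke a ``supplements of congruent angles are congruent'' lemma from the Nuprl library, or prove it on the spot by a second invocation of Theorem~\ref{thm:E4}: using Theorem~\ref{lem:SE} choose auxiliary points $x',y'$ on the respective extensions with $bx' \cong cy'$ while preserving $SB(abx')$ and $SB(acy')$; then Theorem~\ref{thm:E4} applied to $\triangle abx'c$-style configurations transports the interior congruence to an exterior one, and independence of an angle from the specific point taken along a ray lets me replace $x',y'$ with the originally given $x,y$. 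The main obstacle is this last step: the constructive supplementary-angle transfer is more delicate than the Pappus application, since Definition~\ref{def:anglecong} is given in terms of explicit extension witnesses, so one must carefully pick witnesses that match up across the two halves of the straight angles at $b$ and at $c$. If a supplementary-angle lemma is already available among the essential auxiliaries, this reduces to a one-line appeal; otherwise it is where the bulk of the bookkeeping lives.
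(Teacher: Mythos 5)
The paper does not actually print a proof of Theorem~\ref{thm:E5}: Section~\ref{sec:ess_theorems} explicitly contains only the \emph{statements} of the auxiliary theorems, with the proofs living in the Nuprl library. So there is no in-text argument to compare yours against line by line; I can only assess your outline on its merits. For the first conclusion, your Pappus-style application of Theorem~\ref{thm:E4} to the triangle and its mirror image $(b,a,c)\mapsto(c,a,b)$ is the standard and correct route, and your construction of the arm-swapped self-congruence $bac\cong_a cab$ directly from Definition~\ref{def:anglecong} with shared witnesses is the right way to discharge the angle premise. For the second conclusion, both of your options (a supplements-of-congruent-angles lemma, or Euclid's original double-SAS on equal extensions followed by ray-independence of angle congruence) are workable, and you correctly identify that this half carries essentially all of the bookkeeping.

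Two caveats. First, a small logical slip: you justify extracting $b\#c$ from $a\#bc$ by saying $b\#c$ is ``a stable (negative) proposition.'' It is not --- by Definition~\ref{def:papart} it unfolds to the \emph{positive} atomic relation $bc>bb$. No stability is needed here anyway: $a\#bc$ is itself a genuine disjunction (Definition~\ref{def:lsep}), so ordinary disjunction elimination plus Axiom~\ref{ax:lefttosep} and symmetry of apartness suffices. Second, your outline silently leans on several library facts that are not among the axioms and would need to be in place: congruence under segment reversal ($bc\cong cb$, needed both for the Five-Segment-style bookkeeping and for the witness condition $b'c'\cong c'b'$), additivity of congruence along a line (to get $ab'\cong ac'$ from $ab\cong ac$ after extending by congruent amounts via Theorem~\ref{lem:SE}), and invariance of angle congruence when an arm point is replaced by another point out along the same ray (needed to pass from your chosen $x',y'$ back to the given $x,y$). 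These are all standard and certainly present in the formalization, but in a development where every relation is defined negatively from $>$ and $\Left$, they are exactly the places where hidden indirect reasoning could creep in, so a complete proof must cite them explicitly rather than treat them as immediate.
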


\begin{theorem}[Euclid-Prop6] If in a triangle two angles equal one another, then the sides opposite the equal angles also equal one another.

\begin{align*}
    \forall{}a,b,c:\p. \
  c\#ab \Rightarrow cab \cong_a cba \Rightarrow ca \cong cb.
\end{align*}
\label{thm:E6}
\end{theorem}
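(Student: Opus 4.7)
The conclusion $ca \cong cb$ unfolds via Definition~\ref{def:cong} to the negation $\neg(ca \# cb)$, so a proof of negation suffices: assume $ca \# cb$ and derive $\bot$. By Definition~\ref{def:lenapart}, this assumption is the disjunction $ca > cb \ \lor \ cb > ca$. The hypothesis $cab \cong_a cba$ and the premise $c \# ab$ are both symmetric under swapping $a$ and $b$, so it suffices to refute the single case $ca > cb$.

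For this case, the plan is to mimic Euclid's classical argument for Proposition~6 inside the constructive framework. First, I would extract the apartness facts $a \# b$, $a \# c$, and $b \# c$ from $c \# ab$ using Axioms~\ref{ax:lefttosep},~\ref{ax:leftcycle}, and~\ref{ax:co-trans}. Then, applying the straightedge-compass Axiom~\ref{ax:SC} to cut off from the longer side $ca$ a subsegment of length $cb$, I would produce a point $d$ strictly between $c$ and $a$ with $da \cong cb$. Because $d$ lies on the ray from $a$ through $c$, we have $dab \cong_a cab$, and chaining with the hypothesis yields $dab \cong_a cba$. Euclid~I.4 (Theorem~\ref{thm:E4}), applied to triangles $dab$ and $cba$ with the shared side $ab \cong ba$, the congruence $da \cong cb$, and this angle congruence, then delivers in particular $abd \cong_a bac$, i.e.\ $abd \cong_a cab$ up to the symmetry of the two sides around a common vertex.

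The contradiction is then produced by establishing $abd <_a cba$: since $d$ is strictly between $c$ and $a$, the ray from $b$ through $d$ sits properly inside the angle $cba$, making $abd$ a strict sub-angle of $abc$. Combined with the hypothesis $cba \cong_a cab$, this forces $abd <_a cab$, contradicting the angle congruence $abd \cong_a cab$ obtained from Theorem~\ref{thm:E4}.

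I expect this sub-angle inequality to be the main obstacle. Definition~\ref{def:angleineq} requires producing explicit witness points for a strict betweenness $B(x'p'z')$ and a non-betweenness clause $\neg B(xyp)$, together with the companion apartness conditions. The natural witness in our setting is $d$ itself, which sits strictly between the two endpoints $c$ and $a$ of one side of the angle $cba$; after unfolding Definition~\ref{def:angleineq} the conjuncts should follow from the strict betweenness $SB(cda)$, the apartness $b \# d$ (which holds because $b$ is off the line through $c$ and $a$ while $d$ lies on that line), and the leftness and plane-separation axioms (Axioms~\ref{ax:leftconvex} and~\ref{ax:planesep}). Very likely this sub-angle comparison is already packaged in the accompanying Nuprl library as a named auxiliary lemma, which I would invoke directly to close the proof.
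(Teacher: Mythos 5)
The paper itself gives no proof of Theorem~\ref{thm:E6} --- Section~\ref{sec:ess_theorems} states only the auxiliary theorems, with the proofs living in the Nuprl library --- so there is no in-text argument to compare against. On its own terms, your proposal is sound and squarely in the spirit of the paper's methodology: the conclusion $ca \cong cb$ unfolds to $\neg(ca \# cb)$, so what you give is a \emph{proof of negation}, not \emph{reductio ad absurdum}, and the case split on $ca > cb$ versus $cb > ca$ is just disjunction elimination on the assumed apartness (no excluded middle needed); the WLOG via the $a \leftrightarrow b$ symmetry of $c \# ab$ and $cab \cong_a cba$ is legitimate. The body of the argument is Euclid's own I.6, with the area absurdity replaced by the angle comparison $abd <_a abc \cong_a cab$ against $abd \cong_a cab$ from Theorem~\ref{thm:E4}, and your proposed witnesses for Definition~\ref{def:angleineq} ($p = p' = d$, $x' = a$, $z' = c$, with $B(adc)$ discharging the betweenness conjunct) are the right ones. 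Two small points deserve more care than you give them: first, neither Axiom~\ref{ax:SC} nor Theorem~\ref{lem:SE} directly "cuts off" an interior point --- both produce points \emph{beyond} an endpoint --- so constructing $d$ strictly between $c$ and $a$ with $ad \cong cb$ requires first extending $ca$ beyond $a$ and then applying Axiom~\ref{ax:SC} back along that ray (Euclid I.3, in effect), together with Axiom~\ref{ax:gt-to-sep} and Axiom~\ref{ax:bet-to-gt} to certify $d \# c$ and $d \# a$ from $ca > cb$; second, the final step needs both compatibility of $<_a$ with $\cong_a$ and irreflexivity of $<_a$ modulo $\cong_a$, which you use tacitly but which are exactly the kind of library lemmas the paper's framework supplies. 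Neither is a gap in the idea.
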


\begin{theorem}[Euclid-Prop10] To bisect a given straight line.
\begin{equation*}
\begin{aligned}
    \forall{}a,b:\p. \ a \#b \ \Rightarrow  \\  \exists{}d:\p. \ SB(adb)  \ \wedge{} \  ad  \cong  db.
\end{aligned}
\end{equation*}
\label{thm:E10}
\end{theorem}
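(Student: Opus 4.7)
I would adapt Euclid's classical two-equilateral-triangle construction: build equilateral triangles on both sides of $ab$ and use Plane Separation to locate the midpoint as the intersection of the line through the two apex points with segment $ab$.

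\textbf{Construction.} First I would apply the Compass-Compass axiom (Axiom~\ref{ax:CC}) with centers $a$ and $b$ and both radii equal to $ab$ to obtain a point $c$ with $ac\cong ab$, $bc\cong ab$, and $\Left(c,ab)$. Instantiating the existence witnesses $p,q$ required by C4 amounts to verifying that the two congruent circles overlap, which reduces to strict-inequality manipulations from $a\#b$ using Axioms U1--U6. A second invocation of C4 with the roles of $a$ and $b$ swapped produces a point $c'$ on the opposite side: $ac'\cong bc'$, $ac'\cong ab$, and $\Left(c',ba)$. Applying Plane Separation (Axiom~\ref{ax:planesep}) to $c$ and $c'$ then yields a point $d$ with $\col(abd)$ and $B(cdc')$.

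\textbf{Midpoint verification.} To establish $ad\cong db$ I would show that triangles $acd$ and $bcd$ are congruent by Euclid I.4 (Theorem~\ref{thm:E4}). We already have $ac\cong bc$ and $cd$ shared; the needed angle congruence $acd\cong_a bcd$ reduces, via the fact that $B(cdc')$ places $d$ on the ray from $c$ through $c'$, to the apex-angle congruence $acc'\cong_a bcc'$ in the kites $acc'$ and $bcc'$, which I would derive by an SSS-type argument routed through the isosceles Theorem~\ref{thm:E5} (Euclid I.5) applied to the isosceles triangles $acc'$ (with $ac\cong ac'$) and $bcc'$ (with $bc\cong bc'$). Finally, $SB(adb)$ is obtained from $\col(abd)$, $ad\cong db$, and $a\#b$ by case analysis on Theorem~\ref{thm:colcases}: the alternative configurations $B(bad)$ and $B(dba)$ each force a strict inequality between $ad$ and $db$ via Axiom~\ref{ax:bet-to-gt}, contradicting $ad\cong db$, while the equivalences among $\{a,b,d\}$ are ruled out by $a\#b$ transported across $ad\cong db$.

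\textbf{Main obstacle.} The principal difficulty I anticipate is the instantiation of Axiom~\ref{ax:CC} for two \emph{congruent} circles --- choosing overlap witnesses $p,q$ and transporting strict inequalities like $ab>aa$ across the segment-symmetry of the ordering without invoking stability of congruence, which is deliberately \emph{not} among our axioms. A secondary delicate step is deriving the apex-angle congruence $acc'\cong_a bcc'$ constructively, since no direct SSS theorem appears in the excerpt and the argument must be routed through Theorem~\ref{thm:E5} together with careful manipulation of Definition~\ref{def:anglecong}.
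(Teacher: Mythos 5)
The paper never prints a proof of Euclid-Prop10: Section~\ref{sec:ess_theorems} gives only statements, with the proofs living in the Nuprl library. Judged on its own terms, your construction --- two applications of Axiom~\ref{ax:CC} (with the overlap witnesses reducing to $p=b$, $q=a$, so that the required strict inequalities are just $b\#a$ and $a\#b$) to obtain apexes $c$ and $c'$ with $\Left(c,ab)$ and $\Left(c',ba)$, followed by Axiom~\ref{ax:planesep} to produce $d$ with $\col(abd)$ and $B(cdc')$, followed by SAS --- is exactly the construction these two axioms are shaped for, and the plan is sound at that level.

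The genuine gap is in your final step. $SB(adb)$ unfolds to $B(adb)\ \wedge\ a\#d\ \wedge\ d\#b$, and the two apartness conjuncts are \emph{positive} atomic facts ($ad>aa$ and $db>dd$); they are not negations, so Theorem~\ref{thm:negstable} does not make them stable, and nothing else in the paper does. Theorem~\ref{thm:colcases} only delivers a \emph{double-negated} disjunction, so the case analysis you propose can discharge only stable goals: it can yield $B(adb)$ and $ad\cong db$ (both are conjunctions of negations) but it cannot produce $a\#d$ or $d\#b$. Those must be obtained positively --- e.g.\ by cotransitivity (Axiom~\ref{ax:co-trans}) applied to $a\#b$, converting the unwanted disjunct into the wanted one by pushing the strict inequality across $ad\cong db$ with Axioms~\ref{ax:genull}, \ref{ax:gt-trans} and \ref{ax:ge-trans}. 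The same issue appears one step earlier: Euclid I.4 (Theorem~\ref{thm:E4}) requires the positive hypotheses $a\#d$, $b\#d$ and $c\#d$ for the triangles $acd$ and $bcd$, and at that point you do not yet have $ad\cong db$ available to feed the cotransitivity argument, so these need independent positive derivations (e.g.\ $c\#d$ from $\Left(c,ab)$, $\col(abd)$ and Axiom~\ref{ax:colinearlsep}, and strictness of the betweenness coming out of the plane-separation step). None of this defeats the construction, but ``stability lets me case-split my way to $SB(adb)$'' is precisely the kind of silently classical move this paper is policing, and as written that step would not check in Nuprl.
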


\begin{theorem}[vert-angles-congruent]If two straight lines cut one another, then they make the vertical angles equal to one another.
\label{thm:E15}
\begin{equation*}
\begin{aligned}    \forall a,b,c,x,y:\p. \ SB(abx) \ \wedge \ SB(cby) \ \Rightarrow \ abc \cong_a xby.
\end{aligned}
\end{equation*}
\end{theorem}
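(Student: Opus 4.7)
The plan is to unfold Definition~\ref{def:anglecong} and supply witnesses for the four existential points. The apartness side-conditions $a\#b$, $b\#c$, $x\#b$, $b\#y$ follow directly from the hypotheses $SB(abx)$ and $SB(cby)$. For the witnesses $a', c', x', y'$, I would apply Lemma~\ref{lem:SE} four times to force the extended legs of the two angles to share a common length: extend $ba$ past $a$ by $bx$ to obtain $a'$ with $B(baa')$ and $aa' \cong bx$; symmetrically, extend $bx$ past $x$ by $ab$ to get $x'$, extend $bc$ past $c$ by $by$ to get $c'$, and extend $by$ past $y$ by $bc$ to get $y'$. Using segment additivity along the collinear triples $B(abx)$ and $B(cby)$ coming from the two $SB$ hypotheses, these four extensions yield $ba' \cong bx' \cong ax$ and $bc' \cong by' \cong cy$, so the two ``leg'' congruences required by Definition~\ref{def:anglecong} are discharged.

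What remains is to establish $a'c' \cong x'y'$. By construction, $B(a'bx')$ holds with $ba' \cong bx'$ and $B(c'by')$ holds with $bc' \cong by'$, so $b$ is the common midpoint of both diagonals $a'x'$ and $c'y'$. The desired congruence is exactly the statement that point-reflection through $b$ preserves distances between corresponding endpoints, and I would prove it via the Five-Segment axiom (Axiom~\ref{ax:5seg}), instantiated on the collinear triple $B(a'bx')$ paired with its reverse $B(x'ba')$, using $c'$ and $y'$ in the remaining slots and exploiting $bc' \cong by'$.

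The hardest part will be this final congruence. A naive single application of Axiom~\ref{ax:5seg} to the bisected-diagonals configuration is circular: every natural choice of the auxiliary points in the axiom merely swaps the goal $a'c' \cong x'y'$ with the mirror congruence $a'y' \cong x'c'$. I expect to need a preliminary ``point-reflection preserves distance'' lemma, proved either by a double use of Axiom~\ref{ax:5seg} through an auxiliary extension that breaks the symmetry in the premises, or by combining Axiom~\ref{ax:5seg} with Euclid~I.4 (Theorem~\ref{thm:E4}) on a triangle pair whose included angle at $b$ is not the vertical angle itself. Once that reflection lemma is in hand, instantiating it at $a', c', x', y'$ delivers $a'c' \cong x'y'$, and together with $ba' \cong bx'$, $bc' \cong by'$, and the apartness facts these witnesses discharge Definition~\ref{def:anglecong} for $abc \cong_a xby$.
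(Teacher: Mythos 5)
The paper states this theorem without proof (Section~\ref{sec:ess_theorems} only lists the statements of auxiliary results; the actual proof lives in the Nuprl library), so there is no in-text argument to compare against, but your proposal is sound and follows the standard Tarski-style route such a formalization takes: discharge the apartness conditions from the two $SB$ hypotheses, instantiate Definition~\ref{def:anglecong} with symmetric extensions of the four legs so that $ba'\cong bx'\cong ax$ and $bc'\cong by'\cong cy$, and reduce the remaining congruence $a'c'\cong x'y'$ to the fact that point reflection through $b$ preserves distance. You are also right about where the work is: that last step does not follow from one na\"ive application of Axiom~\ref{ax:5seg}, and needs an auxiliary extension with a double five-segment argument (essentially Satz~7.13 of Szmielew--Schwabh\"auser--Tarski), together with the degenerate five-segment instance that justifies the segment-additivity claims $ba'\cong ax$ and $bc'\cong cy$ — all of which is derivable from the listed axioms.
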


\begin{theorem}[Euclid-Prop18] In any triangle the angle opposite the greater side is greater.
\begin{align*}
    \forall{}a,b,c:\p. \   a  \#  bc  \ \wedge \  ac  > ab \ {}\Rightarrow{} \ bca  <_a  abc.
\end{align*}
\label{thm:E18}
\end{theorem}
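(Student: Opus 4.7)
The plan is to reproduce Euclid's classical proof of I.18 inside the constructive framework. The strategy will be: from $ac > ab$, construct an auxiliary point $d$ strictly between $a$ and $c$ with $ad \cong ab$; apply Euclid I.5 (Theorem~\ref{thm:E5}) to the resulting isosceles triangle $abd$ to obtain $abd \cong_a adb$; then sandwich the angle $bca$ strictly below $adb$ via the exterior-angle inequality, and $abd$ strictly below $abc$ via the fact that $d$ lies interior to side $ac$. Chaining these by transitivity of $<_a$ will yield $bca <_a abc$.

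First I would unpack the hypothesis $a \# bc$ into the three pointwise apartnesses $a \# b$, $a \# c$, and $b \# c$ using Axioms~\ref{ax:leftcycle} and~\ref{ax:lefttosep}. Next, I would construct $d$ with $SB(adc)$ and $ad \cong ab$ by a standard circle-segment intersection derived from Axiom~\ref{ax:SC}: the hypothesis $ac > ab$ is precisely what guarantees that the circle of radius $ab$ centred at $a$ crosses segment $ac$ strictly between $a$ and $c$, yielding both apartnesses $d \# a$ and $d \# c$ needed to upgrade $B(adc)$ to $SB(adc)$. Since $ad \cong ab$, triangle $abd$ is isosceles at $a$, and Euclid I.5 (Theorem~\ref{thm:E5}) gives the base-angle congruence $abd \cong_a adb$.

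To show $bca <_a adb$, I would observe that $SB(adc)$ makes the ray from $d$ through $c$ opposite to the ray from $d$ through $a$, so $adb$ is the exterior angle at $d$ of triangle $bdc$; the exterior-angle inequality (Euclid I.16, available as an auxiliary in the Nuprl library) then supplies $bcd <_a adb$, i.e., $bca <_a adb$. To close the chain, I would establish $abd <_a abc$ by direct unfolding of Definition~\ref{def:angleineq}: $d$ itself, lying strictly between $a$ and $c$, witnesses the ray $bd$ as strictly interior to the angle $abc$, with the leftness side conditions supplied by the convexity of leftness (Axiom~\ref{ax:leftconvex}). Transitivity of $<_a$ applied to $bca <_a adb \cong_a abd <_a abc$ then delivers $bca <_a abc$ as required.

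The hard part will be the sub-angle inequality $abd <_a abc$: Definition~\ref{def:angleineq} requires explicit witness points on opposite sides of a separating ray, together with all of their leftness and apartness side conditions, and discharging these for $d$ constructively demands careful use of Axiom~\ref{ax:leftconvex} together with the orientation information carried by $a \# bc$. A secondary subtlety is that the exterior-angle theorem itself must already be available in the formalization with a proof free of reductio; packaging $SB(adc)$ so that $adb$ reads unambiguously as the exterior angle of triangle $bdc$ at $d$ requires a further piece of leftness bookkeeping about which side of line $ad$ the vertex $b$ inhabits.
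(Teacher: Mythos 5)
The paper itself does not prove Theorem~\ref{thm:E18}: Section~\ref{sec:ess_theorems} explicitly records only the \emph{statements} of the auxiliary theorems, with the proofs living in the Nuprl library, so there is no in-text argument to compare yours against. On its own terms, your proposal is the canonical Euclid I.18 argument (cut off $ad \cong ab$ on $ac$, apply I.5 to the isosceles triangle $abd$, bound $bca$ by the exterior angle $adb$ via I.16, bound $abd$ by $abc$ since $bd$ is interior, and chain by transitivity of $<_a$), and this outline is sound in the constructive setting; the side conditions you flag ($\neg out$ in Definition~\ref{def:angleineq}, the apartness hypotheses on transitivity of $<_a$) are all dischargeable from $a \# bc$.

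Two steps are glossed more than the rest. First, the construction of $d$: Axiom~\ref{ax:SC} intersects a circle with a segment only under the hypothesis $B(cbd)$, i.e., when you already have a point of the segment known to lie inside the circle, so invoking it directly to ``cut off $ab$ on $ac$'' presupposes part of what is being constructed. The more natural route in this axiom system is Theorem~\ref{lem:SE} (lay off a point $x$ on the ray from $a$ through $c$ with $ax \cong ab$, by first extending $ca$ beyond $a$ to get the anchor point $q$ with $B(qac)$), followed by a separate lemma converting $ac > ax$ for two points $x,c$ out from $a$ on the same ray into $SB(axc)$; that ordering-to-betweenness lemma is where Axioms~\ref{ax:gt-trans}--\ref{ax:bet-to-gt} actually earn their keep, and it deserves to be named rather than absorbed into ``a standard circle-segment intersection.'' Second, the exterior-angle inequality (Euclid I.16) is a genuinely heavy dependency: it is not among the auxiliaries listed in Section~\ref{sec:ess_theorems}, and its own constructive proof requires the midpoint construction, vertical angles, and a nontrivial orientation argument to place the doubled point on the correct side of the line --- precisely the kind of leftness bookkeeping you defer. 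Neither point is a wrong turn, but both are where the real formal work of this theorem lives.
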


\begin{theorem}[Euclid-Prop25] If two triangles have two sides equal to two sides respectively, but have the base greater than the base, then they also have the one of the angles contained by the equal straight lines greater than the other.

\begin{align*}
    \forall{}a,b,c,d,&e,f:\p. \
    a  \#  bc \ \wedge \  d  \#  ef  \ \wedge \  ab \cong{}  de  \ \wedge \ \\ &ac  \cong{}  df \ \wedge \  bc >ef {}\Rightarrow{}  edf  <_a  bac.
\end{align*}

\label{thm:E25}
\end{theorem}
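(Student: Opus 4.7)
The plan is to copy triangle $def$ into triangle $abc$ at vertex $a$, with the side corresponding to $de$ placed along $ab$ and the third vertex on the same side of line $ab$ as $c$; the ray from $a$ to this third vertex will serve as the witness $p$ in Definition~\ref{def:angleineq}.

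First I would construct a point $g$ on the $c$-side of line $ab$ with $ag \cong df$ and $bg \cong ef$, using the compass--compass axiom (Axiom~\ref{ax:CC}) after discharging its circle-overlap hypotheses from the triangle inequality in $def$ and the congruence $ab \cong de$. A side--side--side lemma (bootstrapped from Euclid-Prop4 by bisecting a side and invoking Prop~5) yields the angle congruence $bag \cong_a edf$. The hypothesis $bc > ef$ together with $bg \cong ef$ gives $bc > bg$ by Axiom~\ref{ax:gt-trans}.

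Next I must constructively place $g$ strictly inside the angle $bac$. The relation $ag \cong df \cong ac$ makes triangle $acg$ isosceles, and Prop~5 (Theorem~\ref{thm:E5}) gives $acg \cong_a agc$. Together with $bc > bg$, this rules out $g$ lying on the far side of line $ac$ from $b$ or on the far side of line $bc$ from $a$: in either case triangle-inequality/Prop~18-style reasoning (Theorem~\ref{thm:E18}) on triangle $bgc$ or on an auxiliary subtriangle obtained via plane separation would yield $bg \ge bc$, contradicting $bc > bg$. The positive conclusion --- that $g$ lies strictly on the $b$-side of line $ac$ --- is extracted by cotransitivity (Axiom~\ref{ax:co-trans}) applied to the strict length comparison. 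Plane separation (Axiom~\ref{ax:planesep}) applied to $b$ and $c$ across line $ag$ then produces a point $p'$ collinear with $a, g$ and strictly between $b$ and $c$. Taking $p := g$, $b' := b$, $c' := c$, every conjunct of Definition~\ref{def:angleineq} is discharged in turn: $\neg out(a, bc)$ from $a \# bc$; $edf \cong_a bap$ by construction; $B(ap'p)$ from collinearity of $a, p', g$ with $p'$ on the $g$-side of $a$; $out(a, bb')$ and $out(a, cc')$ trivially; $\neg B(bap)$ from $g$'s apartness from line $ab$; and $B(b'p'c')$ together with $p' \# c'$ from the strict betweenness of $p'$ on $bc$.

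The main obstacle is constructively establishing that $g$ lies strictly inside the angle $bac$: the strict length inequality $bc > bg$ must be turned into positive spatial apartness for $g$ with respect to line $ac$, not merely the negation of its absence. Classical treatments finesse this via a reductio, but here the strictness of $bc > bg$ has to be propagated through the isosceles triangle $acg$ and cotransitivity to yield positive leftness of $g$ from line $ac$ on the $b$-side, so that the subsequent plane-separation step gives the strict betweenness $B(b'p'c')$ required by the definition rather than merely its double negation.
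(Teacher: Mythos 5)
The paper itself gives no proof of this theorem in the text---Section~\ref{sec:ess_theorems} only states it and defers to the Nuprl formalization---so your proposal can only be judged against the axiom system. Your overall route (transport triangle $def$ to vertex $a$ as triangle $abg$ with $ag \cong df$ and $bg \cong ef$, show $g$ is strictly interior to the angle $bac$, and then read off Definition~\ref{def:angleineq}) is a reasonable direct strategy, and you correctly identify the crux: Definition~\ref{def:angleineq} contains an existential, so the conclusion is not a negative formula, is not automatically stable, and the positive side-of-line facts needed to invoke Axiom~\ref{ax:planesep} must actually be produced. (Incidentally, the SSS step is easier than you make it: with the Tarski-style Definition~\ref{def:anglecong}, the congruence $bag \cong_a edf$ follows by instantiating the extension points trivially from the three segment congruences, with no bisection or appeal to Theorem~\ref{thm:E5} needed.)

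The gap is in the tool you name for the crux. You propose to rule out the bad positions of $g$ by reasoning in the style of Theorem~\ref{thm:E18} and then ``extract the positive conclusion by cotransitivity applied to the strict length comparison.'' There is no cotransitivity axiom for the strict order $>$ on segment lengths in this system; the only cotransitivity available is Axiom~\ref{ax:co-trans}, which concerns apartness of \emph{points}, and a cotransitivity principle for $>$ is not derivable from Axioms U1--U6 alone, since they contain no disjunctions. Ruling out the wrong side of line $ac$ yields only a negated leftness statement, and $\neg \Left(g,ca)$ does not constructively yield $\Left(g,ac)$ unless $g \# ac$ has already been established positively---which is precisely the fact in question; the same problem recurs when you need $p' \# c$ for the last conjunct of Definition~\ref{def:angleineq}. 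Obtaining $g \# ac$ requires genuine additional work that your outline does not contain: for instance, $g \# c$ follows from $bc > bg$ by Axiom~\ref{ax:gt-to-sep}, but one must still exclude the second intersection of line $ac$ with the circle about $a$ through $c$ (e.g.\ by showing that point lies on the opposite side of line $ab$ from $g$), and only then does the case analysis you describe become a legitimate elimination of a disjunction rather than a hidden double-negation elimination on a non-stable formula. A second, smaller defect: with $p := g$, the conjunct $B(ap'p)$ can fail, because the point where line $ag$ meets segment $bc$ need not lie between $a$ and $g$ (it can lie beyond $g$ when $ab > ac$); you should instead take $p$ sufficiently far out along the ray $ag$, using Theorem~\ref{lem:SE}.
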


In the following theorem, the $\Left$ relation is used in the antecedent to capture the notion of ``alternate angles."

\begin{theorem}[Euclid-Prop27] If a straight line falling on two straight lines makes the alternate angles equal to one another, then the straight lines are parallel to one another: 
\begin{equation*}
\begin{aligned} \forall{}a,b,c,d,x,y:\p.\ 
    (\col(xab) \wedge{}  \col(ycd) \ \wedge \  a  \#  b  \ \wedge \\  c  \#  d  \ \wedge \  \Left(a,  yx)  \ \wedge \  \Left(c,  xy)  \ \wedge \  axy  \cong_a {}  cyx)
    {}\Rightarrow{} \\ ab\parallel cd.
\end{aligned}
\end{equation*}\label{thm:E27}
\end{theorem}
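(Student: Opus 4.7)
The plan is to unfold the definition of $ab \parallel cd$: the apartness conjuncts $a \# b$ and $c \# d$ are hypotheses, leaving only the negation to prove. Since this is a proof of negation, I would assume for contradiction the existence of points $u, v$ collinear with $ab$ satisfying $\Left(u, cd)$ and $\Left(v, dc)$, and derive absurdity.

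Plane Separation (Axiom~\ref{ax:planesep}) applied to $c, d, u, v$ yields a point $w$ with $\col(cdw)$ and $B(uwv)$. Since $u$ and $v$ both lie on line $ab$ and $w$ sits between them, $w$ is itself collinear with $ab$, so $w$ is a common point of the two lines. The hypotheses $\Left(a, yx)$ and $\Left(c, xy)$ place $a$ and $c$ on opposite sides of the transversal $xy$, making $axy$ and $cyx$ genuine alternate interior angles. The classical argument of Euclid I.27 then applies: $x, y, w$ form a triangle, and the hypothesis $axy \cong_a cyx$ would make an exterior angle of this triangle at $x$ congruent to an interior-opposite angle at $y$ (after reconciling $cyx$ with the interior angle at $y$ via the collinearity of $c, y, w$ along line $cd$). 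An exterior-angle strict inequality, derivable from Euclid I.4 (Theorem~\ref{thm:E4}) and Euclid I.18 (Theorem~\ref{thm:E18}) through a midpoint auxiliary construction, contradicts the congruence.

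The main obstacle will be the case analysis on where $w$ falls: it may lie on either side of the transversal $xy$, and may sit in various positions along lines $ab$ and $cd$ relative to $a, b, x$ and $c, d, y$. Each case must route the alternate-angle hypothesis against the exterior-angle inequality, and the propagation of collinearity from $u, v$ to $w$ also requires some care (the cleanest route is by contraposition of Axiom~\ref{ax:colinearlsep}, using the stability of $\col$). Stability of collinearity, betweenness, and congruence (Theorem~\ref{thm:negstable}) permits all case splits to be introduced through their double negations, so no step requires reductio ad absurdum on a positive statement.
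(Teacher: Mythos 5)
The paper never proves Euclid I.27 in the text: Section~\ref{sec:ess_theorems} explicitly contains only the \emph{statements} of the auxiliary theorems used in Theorem~\ref{thm:SL} and Lemma~\ref{lem:parallelogram}, with the proofs living in the Nuprl library, so there is no in-paper argument to compare yours against. On its own terms, your outline is the standard Euclid I.27 strategy and is constructively legitimate: after discharging the two apartness conjuncts, what remains of $ab \parallel cd$ is a negation, so assuming witnesses $u,v$ and deriving absurdity is a proof of negation rather than \emph{reductio ad absurdum}, and the later case splits are harmless because the goal is $\bot$.

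Two steps are genuinely underspecified. First, the passage from $\col(uab)$, $\col(vab)$, $B(uwv)$ to $\col(wab)$ is not delivered by Axiom~\ref{ax:colinearlsep}, which transfers apartness from a line $bc$ to points of a \emph{different} line $ab$ through $b$; what you actually need is convexity of the line itself (the complement of $\Left(\cdot,ab)\lor\Left(\cdot,ba)$ is closed under betweenness), and Axiom~\ref{ax:leftconvex} only gives convexity of each open half-plane, not of their common boundary. That is a separate lemma you must supply. Second, the exterior-angle inequality is itself a missing auxiliary theorem: it does not appear in the paper's list, and it is not obtainable ``from Theorem~\ref{thm:E4} and Theorem~\ref{thm:E18}'' as stated, since Euclid I.18 classically sits downstream of the exterior-angle theorem; the workable route is Euclid's own proof of I.16 --- midpoint (Theorem~\ref{thm:E10}), extension (Theorem~\ref{lem:SE}), vertical angles (Theorem~\ref{thm:E15}), SAS (Theorem~\ref{thm:E4}) --- followed by an ``an angle is less than any angle it is a proper part of'' step, which needs its own interior-point reasoning against Definition~\ref{def:angleineq}. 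You should also note that the degenerate cases $w\equiv x$ and $w\equiv y$ are excluded by the hypotheses $\Left(a,yx)$ and $\Left(c,xy)$ together with Axiom~\ref{ax:lefttosep}. None of this breaks the plan, but each item is a real lemma rather than a remark.
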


\begin{lemma}[geo-intersection-unicity] 
\begin{equation*}
\begin{aligned}     \forall a,b,c,d,p,q : \p. \  \neg \col(abc) \ \wedge \ c\#d \ \wedge \\ \col(abp) \ \wedge \ \col(abq) \ \wedge \ \col(cdp) \ \wedge \ \col(cdq) \Rightarrow p \equiv q .
\end{aligned}
\end{equation*}\label{lem:consuni}
\end{lemma}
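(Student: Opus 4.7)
The plan is to prove the lemma in proof-of-negation form. Since the conclusion $p \equiv q$ unfolds to $\neg(p\#q)$, I assume $p\#q$ and aim to derive $\bot$. The goal $\bot$ is stable (it is itself a negation), so I may freely eliminate double negations; in particular, from $\neg\col(abc)=\neg\neg(a\#bc)$ I may introduce the atomic apartness $a\#bc$. The geometric content is the classical line-uniqueness statement: two distinct points common to the line through $a,b$ and the line through $c,d$ force those lines to coincide, contradicting $a\#bc$ (i.e.\ $c$ lying off the line through $a,b$).

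The technical workhorse is Axiom~\ref{ax:colinearlsep}, $a\#bc \wedge y\#b \wedge \col(yab) \Rightarrow y\#bc$, which propagates apartness from a line along the line. Combined with the permutation symmetries of $\col$ and of point-segment apartness---both immediate from their definitions as negations via stability (using Axioms~\ref{ax:leftcycle} and~\ref{ax:lefttosep})---this axiom becomes the tool for transferring the apartness of $a$ from line $bc$ across the rest of the configuration.

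Concretely, I would apply the axiom in a chain of three stages. First, from $a\#bc$, $\col(pab)$ (a permutation of the hypothesis $\col(abp)$), and $p\#b$, Axiom~\ref{ax:colinearlsep} gives $p\#bc$. Second, from this $p\#bc$, $\col(pcd)$ (a permutation of $\col(cdp)$), and $d\#c$ (from $c\#d$), a suitably permuted application of the same axiom gives $d\#bc$. Third, the analogous sequence with $q$ in place of $p$ yields $q\#bc$; a final application of Axiom~\ref{ax:colinearlsep} then combines $\col(cdq)$ with the apartnesses just collected to produce a statement that collides with one of the standing collinearity hypotheses, closing with $\bot$. The auxiliary apartness premises $p\#b$, $q\#b$, $p\#d$, $q\#d$ used along the way are manufactured from the single hypothesis $p\#q$ by repeated applications of cotransitivity (Axiom~\ref{ax:co-trans}).

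The main obstacle is bookkeeping rather than any clever geometric insight. Cotransitivity multiplies the cases, and each use of Axiom~\ref{ax:colinearlsep} demands its premises in a specific cyclic arrangement, so small permutation lemmas for $\col$ and $\#$ must be invoked repeatedly to align the hypotheses with the shape required. Because the goal is stable, each case produced by cotransitivity can be dispatched independently, and no new idea beyond careful alignment is needed---making the proof a direct but intricate chase through atomic axioms of exactly the kind where a proof assistant pays off.
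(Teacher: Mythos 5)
The paper itself states this lemma without an in-text proof (Section~\ref{sec:ess_theorems} gives statements only), so there is nothing to compare against directly; judged on its own terms, your setup is right but the chain you describe does not close. The framework is fine: assuming $p\#q$, using stability of $\bot$ to extract $a\#bc$ from $\neg\col(abc)$, leaning on Axiom~\ref{ax:colinearlsep} plus the permutation symmetries of $\col$ and point--segment apartness (via Axioms~\ref{ax:leftcycle} and~\ref{ax:lefttosep}), and manufacturing the pairwise apartnesses by cotransitivity. Steps one and two are also individually valid: from $a\#bc$, $p\#b$, $\col(pab)$ you get $p\#bc$, and from $p\#cb$, $d\#c$, $\col(dpc)$ you get $d\#cb$. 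The problem is that every fact this chain produces is of the form ``$x$ lies apart from line $bc$'' (for $x$ among $p,q,d$), and line $bc$ is not one of the two lines in the configuration: the hypotheses assert collinearities only on lines $ab$ and $cd$. Axiom~\ref{ax:colinearlsep} never changes which line a point is asserted to be off of --- it only replaces the off-line point --- so your ``final application'' combining $\col(cdq)$ with these apartnesses can only yield statements like $q\#db$, for which no contradicting collinearity hypothesis exists. No triple ever appears both in a derived apartness and in a standing $\col$ hypothesis, so $\bot$ is never reached.

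The missing idea is the ``two distinct points determine a line'' transfer, which lets the two lines collide on a common triple. Concretely: from $a\#b$ (extracted from $a\#bc$ by permutation and Axiom~\ref{ax:lefttosep}), $\col(abp)$, and $\col(abq)$, a proof of negation via Axiom~\ref{ax:colinearlsep} gives $\col(qbp)$ (assume $q\#bp$, transfer the apartness back to $a$ along $\col(aqb)$, and contradict $\col(abp)$); symmetrically, $c\#d$, $\col(cdp)$, $\col(cdq)$ give $\col(qcp)$. Then, in the cotransitivity case $p\#b$, your step one yields $p\#bc$, i.e.\ $b\#pc$ by permutation, and one further application of Axiom~\ref{ax:colinearlsep} with $\col(qbp)$ and $q\#p$ yields $q\#pc$ --- which finally contradicts the derived $\col(qcp)$. (The case $q\#b$ is symmetric under swapping $p$ and $q$.) Your step two, deriving $d\#bc$, turns out to be a detour that is not needed once this transfer lemma is in place.
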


\begin{lemma}[left-convex] Given a segment $ab$ and a point $x$ lying to the left of it, the point $y$ lying \emph{out} from $x$ that along the segment $ax$ or $bx$ is in the same half-plane as $x$. 
\begin{equation*}
\begin{aligned}     \forall a,b,x,y: \p. \ \Left(x,ab) \ \wedge \ ( out(axy) \ \lor \ out(bxy))   \Rightarrow \\ \Left(y,ab)
\end{aligned}
\end{equation*}\label{lem:outleft}
\end{lemma}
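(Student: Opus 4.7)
The plan is to case-split on the constructive disjunction $out(a,xy) \vee out(b,xy)$; the two cases are entirely analogous after swapping the roles of $a$ and $b$, so I focus on $out(a,xy)$. Unfolding Definition~\ref{def:out} yields $a \# x$, $a \# y$, and the stable statement $\neg\neg(B(a,x,y) \vee B(a,y,x))$. Since each betweenness relation contains $\col$ as a conjunct (Definition~\ref{def:bet}) and $\col$ is stable by Theorem~\ref{thm:negstable}, standard permutation lemmas give $\col(y,x,a)$.

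The central move is to aim not at $\Left(y,ab)$ directly---which, being a positive atom, is not stable and cannot be extracted from a merely double-negated hypothesis---but at the apartness $y \# ab$. Applying Axiom~\ref{ax:colinearlsep} with $x$ in the role of the off-line point (using $x \# ab$, obtained from $\Left(x,ab)$ via Definition~\ref{def:lsep}, together with $y \# a$ from $a \# y$ by symmetry of $\#$, and $\col(y,x,a)$) yields $y \# ab$. By Definition~\ref{def:lsep} this unfolds to the constructive disjunction $\Left(y,ab) \vee \Left(y,ba)$, so it suffices to rule out $\Left(y,ba)$ to conclude by disjunctive syllogism.

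To eliminate $\Left(y,ba)$, I would assume it for the sake of contradiction. Plane Separation (Axiom~C~\ref{ax:planesep}) applied to $\Left(x,ab)$ and $\Left(y,ba)$ supplies a point $z$ with $\col(a,b,z)$ and $B(x,z,y)$. Thus $z$ lies on line $ab$ and also---via $B(x,z,y)$ together with the established $\col(a,x,y)$---on the line through $x$ and $y$. Because $x$ is off line $ab$, Lemma~\ref{lem:consuni} pins $z \equiv a$, hence $B(x,a,y)$ holds. With the goal now being absurdity (which is stable) I may case-split on $\neg\neg(B(a,x,y) \vee B(a,y,x))$: either branch, combined with $B(x,a,y)$ and the apartnesses $a\#x$, $a\#y$, forces (via the inner transitivity axiom, Axiom~\ref{ax:last}) an equivalence of two points already known to be apart, yielding the contradiction.

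The main obstacle is precisely this interplay between the positive, non-stable conclusion $\Left(y,ab)$ and the only doubly-negated betweenness furnished by $out$. The sidestep---routing through the apartness $y \# ab$, which decomposes constructively into $\Left$ on both orientations of $ab$, and then discharging the wrong orientation by an honest contradiction---crucially relies on Plane Separation together with the intersection-uniqueness lemma. Minor technicalities (symmetry of $\#$, permutation invariance of $\col$, and the degenerate possibility $x \equiv y$, dispatched by preservation of $\Left$ under $\equiv$ or ruled out via cotransitivity, Axiom~C~\ref{ax:co-trans}) are routine given the library's preliminaries.
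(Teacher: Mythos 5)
The paper does not actually prove this lemma: Section~\ref{sec:ess_theorems} explicitly ``contains only the statements'' of the auxiliary results, with the formal proof living in the Nuprl library, so there is no in-paper argument to compare yours against. Judged on its own terms, your sketch is sound and makes the right constructive moves. The central observation --- that $\Left(y,ab)$ is a positive atom that cannot be reached by double-negation elimination, so one must first establish the apartness $y \# ab$ via Axiom~\ref{ax:colinearlsep} and then discharge the unwanted disjunct $\Left(y,ba)$ by an honest proof of negation followed by disjunctive syllogism --- is exactly the discipline the paper's framework demands. The refutation of $\Left(y,ba)$ via Plane Separation, Lemma~\ref{lem:consuni} pinning the intersection point to $a$, and Axiom~\ref{ax:last} collapsing $B(axy)$ (or $B(ayx)$) against $B(xay)$ into $B(axa)$, which contradicts $a\#x$ through Axiom~\ref{ax:genull}, all checks out.

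One imprecision worth fixing: the degenerate possibility $x \equiv y$ cannot be ``ruled out via cotransitivity'' --- it is consistent with $a\#x$ and $a\#y$ --- and ``preservation of $\Left$ under $\equiv$'' cannot be invoked at the top level, because the case split $\neg\neg(x\#y \vee \neg x\#y)$ is only licensed when the current goal is stable, and the top-level goal $\Left(y,ab)$ is not. The split is legitimate, and only needed, inside the refutation of $\Left(y,ba)$ (where the goal is $\bot$): there $x\#y$ feeds Lemma~\ref{lem:consuni}, while in the branch $\neg(x\#y)$ the betweenness $B(xzy)$ forces $z \equiv x$ (via $xy \ge xz$ and Axiom~\ref{ax:ge-trans}), whence $\col(abx)$ contradicts $\Left(x,ab)$. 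With that relocation the argument goes through.
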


\begin{lemma}[geo-left-out] Given a segment $ab$ and a point $c$ lying \emph{out} from $b$ along $ab$, if the point $x$ lies to the left of $ab$, then $x$ also lies to the left of $ac$.
\begin{align*}
    \forall a,b,c,x: \p. \ \Left(x,ab) \ \wedge \ out(abc) \Rightarrow  \Left(x,ac)
\end{align*}
\label{lem:leftout2}
\end{lemma}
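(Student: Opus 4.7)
The plan is to exploit the cyclic symmetry of $\Left$: by two applications of Axiom~\ref{ax:leftcycle}, the hypothesis $\Left(x, ab)$ yields $\Left(b, xa)$, and one application of the same axiom reduces the goal $\Left(x, ac)$ to proving $\Left(c, xa)$. The task therefore becomes showing that $b$ and $c$ lie on the same side of segment $xa$.

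First I would establish $c \# ax$ using Axiom~\ref{ax:colinearlsep}: $\Left(b, xa)$ supplies $b \# ax$, while the collinearity $\col(cba)$ and the apartness $c \# a$ both come from $out(abc)$; instantiating the axiom with $b$ as the pivot yields $c \# ax$. By Definition~\ref{def:lsep} this unfolds to $\Left(c, ax) \vee \Left(c, xa)$, so it suffices to rule out the wrong orientation $\Left(c, ax)$. Assuming $\Left(c, ax)$, Plane Separation (Axiom~\ref{ax:planesep}) applied together with $\Left(b, xa)$ produces $p$ with $\col(xap)$ and $B(bpc)$. Thus $p$ lies on both line $xa$ and line $bc$, which are distinct since $x \# ab$ places $x$ off line $abc$; Lemma~\ref{lem:consuni} then forces $p \equiv a$, whence $B(bac)$. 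Since $\bot$ is stable, I can reason by cases on the disjunction $B(abc) \vee B(acb)$ provided (up to double negation) by $out(abc)$. Each case contradicts $B(bac)$ through Axiom~\ref{ax:bet-to-gt} and the length-ordering relations: for $B(bac) \wedge B(abc)$, applying Axiom~\ref{ax:bet-to-gt} to the symmetric form $B(cba)$ with $a \# b$ yields $ac > bc$, while $B(bac)$ itself gives $bc \ge ac$; the case $B(bac) \wedge B(acb)$ is analogous. Hence $\neg \Left(c, ax)$; combining with $c \# ax$ delivers $\Left(c, xa)$, and a final application of Axiom~\ref{ax:leftcycle} gives the goal $\Left(x, ac)$.

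The main obstacle is handling the degenerate configurations that the axioms cannot dispatch directly --- notably the case $\neg b \# c$ and the case where the intersection point $p$ from Plane Separation coincides with $b$ or $c$, in which Lemma~\ref{lem:consuni} is not directly applicable. These sub-cases must be resolved using extensionality of $\Left$ under the equivalence relation $\equiv$: for instance, $b \equiv c$ together with $\Left(b, xa)$ gives $\Left(c, xa)$, which combined with the assumed $\Left(c, ax)$ can be contradicted by a second invocation of Plane Separation that would place the single point $c$ on both sides of segment $xa$.
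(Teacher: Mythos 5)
The paper never gives a proof of this lemma: Section~\ref{sec:ess_theorems} records only the statements of the auxiliary results, with the actual proofs living in the Nuprl library, so there is no in-paper argument to compare yours against. Judged on its own terms, your architecture is reasonable and, importantly, it respects the one constraint that makes this lemma delicate in this setting: the conclusion $\Left(x,ac)$ is a \emph{positive atomic} relation, hence not stable, so you may not open case distinctions at the top level. You correctly obtain the genuine disjunction $\Left(c,ax)\vee\Left(c,xa)$ constructively (via Axiom~\ref{ax:colinearlsep}), and you confine all double-negation case analysis (on $b\#c$, and on $B(abc)\vee B(acb)$ extracted from $out(abc)$) to branches whose goal is $\bot$, discharging the bad branch by ex falso. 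The refutation itself --- Plane Separation producing $p$ on line $ax$ and between $b$ and $c$, Lemma~\ref{lem:consuni} forcing $p\equiv a$, and the clash between $B(bac)$ and the betweenness disjunct of $out(abc)$ via Axiom~\ref{ax:bet-to-gt} together with the definition $ab\ge cd:=\neg(cd>ab)$ --- is correct in outline.

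The genuine gap is the one you half-acknowledge at the end: the argument repeatedly transports facts across the merely negative equivalence $\equiv$, and in particular it invokes ``extensionality of $\Left$ under $\equiv$.'' Nothing in Axioms U1--U13 or C1--C5 licenses substituting an $\equiv$-equivalent point into the \emph{positive} atomic relations $\Left$ or $>$; in an apartness-based constructive theory this is precisely the kind of principle that must itself be proved, normally from a cotransitivity property of the positive relation, and no such cotransitivity for $\Left$ or $>$ appears among the listed axioms (Axiom~\ref{ax:co-trans} gives it only for point apartness). The same issue infects the main line, not just the degenerate cases: passing from $p\equiv a$ and $B(bpc)$ to $B(bac)$ requires substitutivity of $B$, and your uses of $ca>cb\Rightarrow ac>bc$ and of symmetry of $\#$ are likewise unsupported by the stated axioms. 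These are all plausibly lemmas of the Nuprl library, but as written your proof is not closed over the axiom set the paper presents; you should either derive the needed transport and permutation lemmas first, or restructure the degenerate sub-cases so that only stable, negatively defined relations are ever carried across $\equiv$.
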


\begin{lemma}[strict-between-left-right]
\begin{equation*}
\begin{aligned} 
\forall a,b,c,x,y: \p. \  \Left(x,ab) \ \wedge \ \col(abc) \ \wedge \  SB(xcy) \Rightarrow \\ \Left(y,ba)
\end{aligned}
\end{equation*}
\label{lem:lefttoright}
\end{lemma}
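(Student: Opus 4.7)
The conclusion $\Left(y,ba)$ is a positive atomic relation, so my plan cannot appeal to double-negation elimination; it must constructively exhibit $y$ in the opposite half-plane from $x$. The strategy has two phases: first, rule out $\Left(y,ab)$ by the convexity of leftness; second, derive the segment-apartness $y \# ab$, so that by definition $\Left(y,ab) \lor \Left(y,ba)$ combined with the first phase pins down $\Left(y,ba)$ via disjunction elimination.

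I begin with preliminaries. From $\Left(x,ab)$, cyclic Axiom~\ref{ax:leftcycle} together with Axiom~\ref{ax:lefttosep} gives $a\#b$, $x\#a$, and $x\#b$, and Definition~\ref{def:lsep} gives $x\#ab$. From $SB(xcy)$ I unfold $B(xcy) \land x\#c \land c\#y$, and chaining $xy \ge xc > xx$ via Definition~\ref{def:bet}, Definition~\ref{def:papart}, and Axiom~\ref{ax:ge-trans} gives $x\#y$. Finally, since $\col$ is an equivalence relation, $\col(abc) \Leftrightarrow \col(cab)$, so by Definitions~\ref{def:col} and~\ref{def:lsep} we obtain $\neg\Left(c,ab) \land \neg\Left(c,ba)$.

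For the first phase, suppose $\Left(y,ab)$; combined with $\Left(x,ab)$ and $B(xcy)$, Axiom~\ref{ax:leftconvex} (convexity of leftness) would yield $\Left(c,ab)$, contradicting $\neg\Left(c,ab)$. Hence $\neg\Left(y,ab)$.

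The second phase, deriving $y\#ab$, is the main obstacle. The plan is to invoke Axiom~\ref{ax:colinearlsep}, which transports the segment-apartness $x \# ab$ to $y \# ab$ provided $y$ is collinear with $x$ and an \emph{endpoint} of segment $ab$. The difficulty is that $B(xcy)$ only gives $\col(xcy)$, so the line through $x,y$ meets $ab$ at the interior point $c$ rather than at an endpoint. To bridge the gap, I case-split via cotransitivity of apartness (Axiom~\ref{ax:co-trans}) on $c\#y$ against $a$ and against $b$, obtaining $c\#a \lor y\#a$ and $c\#b \lor y\#b$; in each branch, I combine the resulting apartness with $\col(abc)$ and $\col(xcy)$ to ``swap'' the interior intersection point $c$ for an endpoint of $ab$ in the collinearity relation, setting up the hypothesis of Axiom~\ref{ax:colinearlsep}. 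This swap---essentially the fact that two collinear triples sharing an apart pair of points determine the same line---is not a direct axiom and will likely require auxiliary lemmas from the Nuprl library. Once $y\#ab$ is derived, disjunction elimination with $\neg\Left(y,ab)$ from the first phase yields $\Left(y,ba)$ as desired.
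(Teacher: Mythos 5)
The paper itself states this lemma without proof (Section~\ref{sec:ess_theorems} gives only statements, deferring proofs to the Nuprl library), so your proposal can only be judged on its own terms. Your overall architecture is the right one: establish $\neg\Left(y,ab)$ by applying Axiom~\ref{ax:leftconvex} to $\Left(x,ab)$, a hypothesized $\Left(y,ab)$, and $B(xcy)$ against $\neg\Left(c,ab)$ from $\col(abc)$; separately derive the constructive disjunction $y\#ab$ via Axiom~\ref{ax:colinearlsep}; then eliminate the disjunction in Definition~\ref{def:lsep} to land on $\Left(y,ba)$. Your opening observation that the conclusion is a positive atomic relation, so stability arguments are unavailable and the disjunction must be produced constructively, is exactly the right framing.

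The gap is in the case split of your second phase. Applying cotransitivity (Axiom~\ref{ax:co-trans}) to $c\#y$ yields the four branches of $(c\#a \lor y\#a)$ and $(c\#b \lor y\#b)$, and the branch $y\#a \wedge y\#b$ is a dead end: knowing $y$ apart from both endpoints says nothing about $y$ being off the line $ab$ (a third point on the line satisfies both), and it gives you no apartness of $c$ from an endpoint, which is what the ``swap'' needs. Axiom~\ref{ax:colinearlsep} requires the transversal to pass through an \emph{endpoint} of the named segment, so you must re-anchor the segment $ab$ at $c$, and for that you need $c\#a$ or $c\#b$ --- not $y\#a$ or $y\#b$. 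The repair is to apply cotransitivity to $a\#b$ (available from $\Left(x,ab)$ by Axiom~\ref{ax:lefttosep}) against the point $c$, giving $a\#c \lor b\#c$; in either branch the line $ab$ can be rewritten as $ca$ or $cb$, Axiom~\ref{ax:colinearlsep} applies with transversal $xcy$, and the conclusion transports back. With that substitution, and granting the permutation/line-uniqueness lemmas you already flag as library facts, the proof goes through.
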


\begin{theorem}[outer-pasch-strict]
\begin{align*}
\forall a,b,c,&x,q: \p. \ x\#bq \ \wedge \  SB(bqc) \ \wedge \ SB(qxa) \Rightarrow \\ &\exists p:\p. \ SB(bxp) \ \wedge \ SB(cpa). 
\end{align*}
\label{lem:SOP}
\end{theorem}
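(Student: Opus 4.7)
The conclusion is a positive existential, so I will construct $p$ directly by applying Axiom~\ref{ax:planesep} (plane separation) to line $xb$, after first verifying that $c$ and $a$ lie on opposite sides of it. The plan breaks into three phases: (i) normalize the orientation hypothesis to a single leftness fact, (ii) use the auxiliary lemmas of Section~\ref{sec:ess_theorems} to pin down the sides of $c$ and $a$ relative to line $xb$, and (iii) apply plane separation and upgrade the resulting weak betweennesses to strict betweennesses.

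For phase (i), unfolding $x\#bq$ yields the disjunction $\Left(x, bq) \vee \Left(x, qb)$; the two cases are symmetric under swapping the orientation of $bq$, so it suffices to treat $\Left(x, bq)$. Cycling via Axiom~\ref{ax:leftcycle} gives $\Left(q, xb)$ and (cycling once more) $\Left(b, xq)$, from which Axiom~\ref{ax:lefttosep} supplies the apartnesses $b\#q$, $q\#x$, and $b\#x$ that will be needed later.

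For phase (ii), I would derive $\Left(c, xb)$ by applying Lemma~\ref{lem:outleft} to $\Left(q, xb)$ together with $out(b, qc)$, the latter following immediately from $SB(bqc)$. For the opposite side, I would derive $\Left(a, bx)$ by applying Lemma~\ref{lem:lefttoright} to $\Left(q, xb)$, the trivial collinearity $\col(xbx)$, and $SB(qxa)$; intuitively, this says that the segment $qa$ crosses line $xb$ at the interior point $x$, putting $q$ and $a$ on opposite sides of $xb$.

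For phase (iii), Axiom~\ref{ax:planesep} applied to $\Left(c, xb) \wedge \Left(a, bx)$ yields $p$ with $\col(xbp) \wedge B(cpa)$. What remains is to upgrade these to $SB(bxp) \wedge SB(cpa)$, and this is where I expect the main obstacle: $\col$ and $B$ are defined negatively (Definitions~\ref{def:col} and~\ref{def:bet}), so the positive apartness clauses $x\#p$, $c\#p$, and $p\#a$ required by $SB$ are not immediate from the plane-separation output. The plan here is to exploit that $c$ is positively apart from segment $xb$ (i.e.\ $c\#xb$ from $\Left(c, xb)$) and similarly $a\#bx$, and then to push these apartnesses onto $p$ by threading Axiom~\ref{ax:colinearlsep} together with cotransitivity (Axiom~\ref{ax:co-trans}). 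For the correct orientation $B(bxp)$ on the collinear triple $b, x, p$, Theorem~\ref{thm:colcases} supplies the weak-betweenness cases, and the wrong orderings are to be ruled out because they would force $c$ and $a$ onto the same side of line $xb$, contradicting phase (ii).
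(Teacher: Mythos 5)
The paper itself gives no proof of this theorem: Section~\ref{sec:ess_theorems} only states it, deferring the proof to the Nuprl library (it is a form of outer Pasch, which is notoriously laborious to derive). So I can only assess your plan on its own terms. Phases (i) and (ii) check out: the cycling via Axiom~\ref{ax:leftcycle}, the use of Lemma~\ref{lem:outleft} with $out(b,qc)$ to place $c$, and the use of Lemma~\ref{lem:lefttoright} with the provable degenerate collinearity $\col(xbx)$ to place $a$ are all legitimate, and Axiom~\ref{ax:planesep} then delivers $p$ with $\col(xbp)\wedge B(cpa)$. The gap is concentrated in phase (iii), and it is genuine.

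First, the positive apartness obligations $x\#p$, $c\#p$, $p\#a$ cannot be discharged the way you sketch. Plane separation hands you only \emph{negative} information about $p$ ($\col$ and $B$ are defined as negations), so every positive fact about $p$ must be manufactured. Cotransitivity (Axiom~\ref{ax:co-trans}) applied to, say, $c\#a$ yields only the disjunction $c\#p\ \lor\ a\#p$, and neither disjunct implies the other; and Axiom~\ref{ax:colinearlsep} concludes a point--\emph{segment} apartness $y\#bc$, not the point--point apartnesses $SB$ demands. The route that actually works in this axiom system is to establish a leftness fact having $p$ as a segment \emph{endpoint} (e.g.\ $\Left(c,xp)$), from which Axioms~\ref{ax:leftcycle} and~\ref{ax:lefttosep} extract $x\#p$ and $c\#p$; but producing such a fact is essentially the entire content of outer Pasch and requires further constructions (e.g.\ via Lemma~\ref{lem:leftout2}), not just bookkeeping. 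Second, your stated reason for ruling out the bad orderings in the Theorem~\ref{thm:colcases} case split is incorrect: orderings such as $B(xpb)$ or $B(pbx)$ do \emph{not} ``force $c$ and $a$ onto the same side of line $xb$'' --- they are perfectly compatible with $c$ and $a$ lying on opposite sides of that line, which is exactly what plane separation already guarantees and what any crossing point of $ca$ with line $xb$ satisfies. The contradictions must instead come from positions relative to \emph{other} lines: $B(pbx)$ puts $p$ strictly on the far side of line $bc$ from $a$ while $B(cpa)$ keeps it weakly on $a$'s side, and $B(xpb)$ puts $p$ weakly on $b$'s side of line $qa$ while interior points of $ca$ lie strictly on $c$'s side. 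Since $B(bxp)$ is stable these reductio arguments are admissible, but they need to be set up against the correct transversals.
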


\begin{lemma}[hp-angle-sum-lt4] If the sum of the strict angles $abc$ and $xyz$ is equal to the sum of the strict angles $a'b'c'$ and $x'y'z'$, and $x'y'z'$ is less than $xyz$, then it must be the case that the angle $abc$ is less than the angle $a'b'c'$.
\begin{equation*}
\begin{aligned} \forall{}a,b,c,x,y,z,i,j,k:\p.\\ \forall a',b',c',x',y',z',i',j',k':\p.\\
    abc  +  xyz  \cong{}  ijk
    \ \wedge \  a'b'c'  +  x'y'z'  \cong{}  i'j'k' 
    \ \wedge \ \\ ijk  \cong_a  i'j'k'
    \ \wedge \   a'  \#  b'c'
    \ \wedge \  x'  \# y'z'
    \ \wedge \  x  \#  yz
    \ \wedge \  i  \# jk
    \ \wedge \ \\  x'y'z'  <  xyz {}\Rightarrow{}
      abc  <  a'b'c'.
\end{aligned}
\end{equation*}
\label{lem:anglesumineq}
\end{lemma}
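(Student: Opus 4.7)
The plan is to reduce the two angle-sum hypotheses to a cancellation-style argument and then extract a concrete witness for $abc <_a a'b'c'$ from the hypothesized inequality $x'y'z' <_a xyz$. Intuitively, the hypotheses assert that $abc + xyz$ and $a'b'c' + x'y'z'$ realize the same composite angle (modulo $ijk \cong_a i'j'k'$); since the summand $x'y'z'$ is strictly smaller than $xyz$, its complementary summand $a'b'c'$ must compensate by being strictly larger than $abc$.

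To make this constructive, I would first unfold the two angle-sum hypotheses via Definition~\ref{def:anglesum}. This yields witness points placing $abc$ and $xyz$ side by side at a common vertex with a shared middle ray, together with strictly between witnesses $SB(d'p'f')$ realizing the outer angle $ijk$, and analogously primed witnesses for $a'b'c' + x'y'z' \cong_a i'j'k'$. Using Theorem~\ref{thm:E4} to transport segment congruences along $ijk \cong_a i'j'k'$, I would align the two side-by-side configurations so that both sums occupy the same outer composite, with common vertex and common outer rays.

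Next I would unfold $x'y'z' <_a xyz$ via Definition~\ref{def:angleineq} to obtain an intermediate ray from the vertex of $xyz$ that realizes $x'y'z'$ while lying strictly inside $xyz$. Transporting this ray into the aligned configuration of the previous step (and applying Theorem~\ref{thm:E15} where rays are extended through the vertex) carves the outer angle into a copy of $abc$, a copy of $x'y'z'$, and a remaining sub-angle between them; by the primed sum hypothesis this remaining sub-angle must be congruent to $a'b'c'$. Packaging this carved ray together with the appropriate strict-betweenness witness on the transversal segment yields exactly the data required by Definition~\ref{def:angleineq} for $abc <_a a'b'c'$.

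The main obstacle will be discharging the apartness and plane-orientation side conditions inside Definition~\ref{def:angleineq}, in particular the negated betweenness $\neg B(\cdot)$ and the strict separation of the intermediate point from the far ray endpoint. For these I would propagate the given strictness hypotheses $a' \# b'c'$, $x' \# y'z'$, $x \# yz$, and $i \# jk$ through the carving construction using Axiom~\ref{ax:colinearlsep} and Axiom~\ref{ax:gt-to-sep}, while keeping the carved ray on the correct side of each relevant segment via Lemma~\ref{lem:outleft}, Lemma~\ref{lem:leftout2}, and Lemma~\ref{lem:lefttoright}. The classical cancellation step is straightforward once these witnesses are aligned; the real work is in constructively assembling the plane-orientation data that Definition~\ref{def:angleineq} demands.
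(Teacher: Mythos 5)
The paper never proves this lemma in the text: Section~\ref{sec:ess_theorems} explicitly contains \emph{only the statements} of the auxiliary theorems, and the proof of \emph{hp-angle-sum-lt4} lives solely in the Nuprl library. So there is no in-paper argument to compare yours against; I can only judge the sketch on its own terms, and as a classical outline the carving argument is the right one.

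Two points keep it from being a proof, though. First, the conclusion $abc <_a a'b'c'$ is not a negative proposition --- Definition~\ref{def:angleineq} contains an existential, and so does the angle congruence $abc \cong_a xyp$ buried inside it --- so, unlike most of the reasoning elsewhere in the paper, you cannot fall back on stability anywhere in the endgame: every witness point ($p$, $p'$, $x'$, $z'$) and the transversal with its strictly-between point must be constructed outright, and your sketch never says where they come from beyond ``packaging the carved ray.'' Second, and more seriously, the sentence ``by the primed sum hypothesis this remaining sub-angle must be congruent to $a'b'c'$'' is essentially the content of the lemma rather than a consequence of aligning witnesses: it asserts uniqueness of the ray splitting a given angle into a prescribed pair of sub-angles, i.e.\ well-definedness of angle subtraction. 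Classically one gets this by trichotomy (if the remainder were not congruent to $a'b'c'$ it would be strictly greater or strictly less, contradiction), but that is exactly the kind of case split that is unavailable here on a goal with positive existential content. You would need a separately and constructively proved uniqueness-of-angle-transport lemma --- none of the auxiliary results listed in Section~\ref{sec:ess_theorems} supplies it --- and declaring the cancellation step ``straightforward'' hides precisely the work this lemma exists to do.
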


\section{A Model on the Constructive Reals}\label{sec:model}
The soundness of our axioms with respect to the Nuprl implementation of the constructive reals~\cite{Bickford1} is implied by the following interpretations of our primitives\footnote{The proofs of soundness for the Axiom sets U and C can be found at \url{http://www.nuprl.org/LibrarySnapshots/Published/Version2/Mathematics/reals!model!euclidean!geometry} as the theorems \\ \emph{r2-basic-geo-axioms} and \emph{r2-eu\_wf}, respectively. }. 

\begin{definition} If $x \in \mathbb{R}$ is the length of the segment $ab$ and $y \in \mathbb{R}$ is the length of the segment $cd$, $x$ is \emph{strictly greater} than $y$ if and only if there exists a natural number $n$ such that the $n$th rational terms of $x$ and $y$ differ by more than four:\begin{align*}x>_{\mathbb{R}}y : = \exists n \in \mathbb{N}. \ x(n) >_{\mathbb{Q}} y(n) + 4. \label{def:consR}\end{align*}
Note that the ordering relation $>_{\mathbb{Q}}$ on the rational numbers is decidable (Definition~\ref{def:decidable}) while the ordering relation $>_{\mathbb{R}}$ on the constructive reals is not. 
\end{definition} 

\begin{definition} Given the real coordinates $(x_0, y_0, 1)$, \newline $(x_1, y_1, 1)$, $(x_2, y_2, 1)$ of the points $a,b$ and $c$, respectively, the point $a$ lies \emph{ left of} the segment $bc$ if and only if the determinant of the matrix formed by the points $a,b$ and $c$ is strictly positive: 
\begin{align*}\Left(a,bc) : = 
\begin{vmatrix}
x_0 &y_0 & 1\\ x_1 & y_1 &1\\x_2 & y_2 & 1
\end{vmatrix}
>_{\mathbb{R}}0.\end{align*}
\end{definition}

The soundness of our axioms with respect to the constructive reals provides additional assurance that our axioms do not use \emph{reductio ad absurdum}. Although the axioms presented in this paper differ (as described in Section ~\ref{sec:axioms}) from those presented in ~\cite{kellison},
only minor modifications were necessary for the soundness proofs.

Finally, while the constructive real model for our axioms guarantees that a direct proof of the Steiner-Lehmus theorem exists in the constructive reals, it says nothing about the existence of direct proof in the classical reals.


\section{Conclusion}
We have introduced here for the first time a proof of the Steiner-Lehmus theorem that is entirely absent of the use of \emph{reductio ad absurdum} and can therefore be considered \emph{fully direct}. This theorem was proved in the constructive logic of the Nuprl proof assistant using a novel axiomatization of Euclidean plane geometry without the parallel postulate. The crux of the proof is the realization that congruence in constructive geometry is a \emph{stable relation}, and that the proof of a stable relation permits double negation elimination and therefore also case distinctions.

Finally, we conclude by addressing the suggestion that the many years of failed attempts to find a direct proof of the Steiner-Lehmus theorem was cause to celebrate the indispensability of \emph{reductio ad absurdum}. In particular, a discussion of the Steiner-Lehmus theorem given in a geometry textbook by Coxeter and Greitzer~\cite{coxeter} includes the popular quote of G. H. Hardy~\cite{Hardy}: \emph{Reductio ad absurdum, which Euclid loved so much, is
one of a mathematician's finest weapons.} We instead propose the following:
\begin{center}
\emph{Double negation is
one of a mathematician's finest weapons, and a proof assistant one of her most steadfast companions.}
\end{center}
 \balance
\section*{Acknowledgements} The author would like to thank Robert Constable and Mark Bickford for being patient teachers of intuitionism and type theory, and would also like to acknowledge their contributions to 
the development of constructive Euclidean plane geometry in Nuprl. The author would also like to thank Andrew Appel for his helpful feedback and encouragement. Finally, the author would like to thank the anonymous reviewers whose comments helped to improve this paper. 

The author acknowledges that this material is based upon work supported by the U.S. Department of Energy, Office
of Science, Office of Advanced Scientific Computing Research, Department of Energy Computational
Science Graduate Fellowship under Award Number DE-SC0021110. This report was prepared as an account of work sponsored by an agency of the United
States Government. Neither the United States Government nor any agency thereof, nor any of their
employees, makes any warranty, express or implied, or assumes any legal liability or responsibility for the
accuracy, completeness, or usefulness of any information, apparatus, product, or process disclosed, or
represents that its use would not infringe privately owned rights. Reference herein to any specific
commercial product, process, or service by trade name, trademark, manufacturer, or otherwise does not
necessarily constitute or imply its endorsement, recommendation, or favoring by the United States
Government or any agency thereof. The views and opinions of authors expressed herein do not
necessarily state or reflect those of the United States Government or any agency thereof.

 \bibliographystyle{ACM-Reference-Format}
 \bibliography{bib.bib}

\end{document}